\theoremstyle{definition}
\newtheorem{condition}{Condition} 
\crefname{condition}{condition}{conditions}
\Crefname{condition}{Condition}{Conditions}
\newtheorem{definition}{Definition}
\theoremstyle{plain}
\newtheorem{theorem}{Theorem}
\newtheorem{proposition}[theorem]{Proposition}
\newtheorem{corollary}[theorem]{Corollary} 
\crefname{corollary}{corollary}{corollaries}
\Crefname{corollary}{Corollary}{Corollaries}
\newtheorem{lemma}[theorem]{Lemma}
\crefname{lemma}{lemma}{lemmas}
\Crefname{lemma}{Lemma}{Lemmas}
\theoremstyle{remark}
\newtheorem{remark}{Remark}
\theoremstyle{remark}
\newtheorem*{example*}{Example}
\renewcommand{\vec}[1]{\bm{#1}}
\newcommand{\dleftarrow}{\xleftarrow{R}}
\newcommand{\Z}{\mathbb{Z}}
\newcommand{\w}{\color{white}}
\newcommand{\bigO}[1]{\mathcal{O}\left(#1\right)}
\newcommand\phantomrel[1]{\mathrel{\phantom{#1}}}
\newcommand{\prem}[1]{\pod{{\operator@font rem}\mkern6mu#1}}
\newcommand{\rem}[1]{\allowbreak\if@display\mkern18mu
  \else\mkern12mu\fi{\operator@font rem} \, \, #1}
\newcommand{\brem}{\nonscript\mskip-\medmuskip\mkern5mu\mathbin
  {\operator@font rem}\penalty900
  \mkern5mu\nonscript\mskip-\medmuskip}
\newcommand{\trem}{\operator@font rem}
\newcommand{\tmod}{\operator@font mod}
\begin{document}

\title{Revisiting Fast Fourier multiplication algorithms on quotient rings}
\author{Ramiro Martínez\;\orcidlink{0000-0003-0496-6462}\thanks{e-mail: \nolinkurl{ramiro.martinez@upc.edu}} \qquad Paz Morillo\,\orcidlink{0000-0002-0063-2716} \\ Universitat Politècnica de Catalunya}

\date{2023\\ April}

\maketitle

\begin{abstract}
This work formalizes efficient Fast Fourier-based multiplication algorithms for polynomials in quotient rings such as $\mathbb{Z}_{m}[x]/\left<x^{n}-a\right>$, with $n$ a power of $2$ and $m$ a non necessarily prime integer. We also present a meticulous study on the necessary and/or sufficient conditions required for the applicability of these multiplication algorithms. This paper allows us to unify the different approaches to the problem of efficiently computing the product of two polynomials in these quotient rings.
\end{abstract}

\noindent \textbf{Keywords}: Fast Fourier Transform, polynomial quotient ring, multiplication algorithm, ideal lattice

%


\section{Introduction}

Constructing efficient multiplication algorithms for polynomials with coefficients in a ring $R$ has been an extensive research area. Given two polynomials $g(x),\allowbreak h(x) \in R[x]$ of degree bounded by $n$, $g(x) = \sum_{i=0}^{n-1} g_{i}x^{i}$ and $h(x) = \sum_{i=0}^{n-1} h_{i}x^{i}$, computing its product in a naïve way (known as the schoolbook multiplication algorithm),
\[(g \cdot h)(x) = \sum_{i=0}^{n-1}\sum_{j=0}^{n-1} \left(g_{i} \cdot h_{j}\right)x^{i+j},\]
requires a quadratic number, $n^{2}$, of multiplications of elements from the ring $R$.

\renewcommand{\arraystretch}{1.2}
  \[\arraycolsep=1pt
  \begin{array}{r<{x^6} c r<{x^5} c r<{x^4} c r<{x^3} c r<{x^2} c r<{x} c r}
              \w&   &             \w&   &             \w&   &           g_3 & + & g_2           & + & g_1           & + & g_0           \\
              \w&   &             \w&   &        \w& \times &           h_3 & + & h_2           & + & h_1           & + & h_0           \\ \hline
              \w&   &             \w&   &             \w&   & g_3 \cdot h_0 & + & g_2 \cdot h_0 & + & g_1 \cdot h_0 & + & g_0 \cdot h_0 \\
              \w&   &             \w&   & g_3 \cdot h_1 & + & g_2 \cdot h_1 & + & g_1 \cdot h_1 & + & g_0 \cdot h_1 &   &               \\
              \w&   & g_3 \cdot h_2 & + & g_2 \cdot h_2 & + & g_1 \cdot h_2 & + & g_0 \cdot h_2 &   &             \w&   &               \\
  g_3 \cdot h_3 & + & g_2 \cdot h_3 & + & g_1 \cdot h_3 & + & g_0 \cdot h_3 &   &             \w&   &             \w&   &               \\ \hline
  \textstyle\sum\limits_{\scalebox{0.4}{i+j=6}}g_{i}h_{j} & + &
  \textstyle\sum\limits_{\scalebox{0.4}{i+j=5}}g_{i}h_{j} & + &
  \textstyle\sum\limits_{\scalebox{0.4}{i+j=4}}g_{i}h_{j} & + &
  \textstyle\sum\limits_{\scalebox{0.4}{i+j=3}}g_{i}h_{j} & + &
  \textstyle\sum\limits_{\scalebox{0.4}{i+j=2}}g_{i}h_{j} & + &
  \textstyle\sum\limits_{\scalebox{0.4}{i+j=1}}g_{i}h_{j} & + &
  g_{0}h_{0}
  \end{array}\]

If we now want to work in a ring $R[x]/\left<f(x)\right>$ the schoolbook multiplication algorithm would perform the same amount of operations taking into account that at the end we should perform a reduction.

  For the particular cases $f(x) = x^n \pm 1$ it is clear that the algorithm performs the same number of operations, as we just need to multiply by $\mp 1$ all $i$-coefficients from $n$ to $2 n - 2$ and add them to their corresponding $i-n$ column.

The goal of this work is to formalize and unify some concepts used to build more efficient multiplication algorithms focusing on the ring $R[x] = \Z_{m}[x]$ and then on its quotient $\Z_{m}[x]/\left<x^n+1\right>$, with $n$ a power of $2$ and $m$ non necessarily prime (although sometimes we would also consider $x^n-1$ or in general $x^{n}-a$).

We choose to study optimizations for this particular ring as it is widely used by cryptographic constructions that base their security on ideal lattices~\cite{EC:LyuPeiReg10}, that can be identified with ideals in $\Z_{m}[x]/\left<x^n+1\right>$.

Most of the literature usually deals with this matter by providing a set of recipes that can be applied for some specific particular rings (considering whether $m$ is prime or not or how does $x^{n}-a$ split), without specifying if the imposed conditions in these recipes are necessary or only sufficient. This lack of detail might make more difficult the applicability of such recipes. Through this article we instead analyze what are the fundamental properties that allow us to obtain a computational speedup from a comprehensive and mathematical point of view, so that the reader can apprehend these techniques and distinguish intrinsic properties from superfluous conventions.

Therefore, the analysis presented in this paper will help the reader to avoid confusions when using multiplication algorithms for polynomials in quotient rings $\Z_{m}[x]/\left<x^{n}+1\right>$.

 Many of the ideas developed in this article are folklore when working on other rings such as $\mathbb{C}[x]$ or $\Z_{q}[x]$ with $q$ a prime satisfying certain conditions, but an exhaustive analysis might be very helpful to analyze when these ideas can be, completely or partially, generalized to our ring of interest and why the required conditions for the underlying ring are indeed necessary or just sufficient.

\subsection{Related works}

\subsubsection{Karatsuba multiplication algorithm}

The first subquadratic multiplication algorithm was designed by Karatsuba in~\cite{karatsuba1963multiplication}, with a cost of $\bigO{n^{\log{3}}}$ derived from a clever divide an conquer strategy. We include it here as our approach uses Karatsuba's algorithm as a subroutine.

Let $g(x)$ and $h(x)$ be two polynomials of degree strictly smaller than $n$ (a power of two). We can split these polynomials into upper and lower degree polynomials as $g(x) = g_{U}(x)x^{n/2} + g_{L}(x)$ and $h(x) = h_{U}(x)x^{n/2} + h_{L}(x)$ where $g_{L}, g_{U}, h_{L}, h_{U}$ have all degree smaller than $n/2$.

  A naïve computation would be
  \begin{align*}
    g(x) \cdot h(x) &= (g_{U}(x)x^{n/2} + g_{L}(x))(h_{U}(x)x^{n/2} + h_{L}(x)) \\
     &= \left(g_{U}(x) \cdot h_{U}(x)\right)x^n \\
     &\phantom{{}={}}+ \left(g_{U}(x) \cdot h_{L}(x) + g_{L}(x) \cdot h_{U}(x)\right)x^{n/2} \\
     &\phantom{{}={}}+ g_{L}(x) \cdot h_{L}(x).
  \end{align*}

  That way we divide the full multiplication into four multiplications of polynomials of size $n/2$. One can see however that this does not improve the efficiency of the computation. Let $T(n)$ be the number of operations required for computing the product using this method. The recurrence obtained, $T(n) = 4 T(n/2) + \bigO{n}$, implies (via the Master Theorem for divide-and-conquer recurrences~\cite{masterTheorem}) that $T(n) = \bigO{n^2}$.

Karatsuba's gifted idea was to notice that the crossed terms can be obtained from the other terms and a single multiplication of $n/2$-polynomials. That is, we can write the term $\left(g_{U}(x) \cdot h_{L}(x) + g_{L}(x) \cdot h_{U}(x)\right)$ as
\[(g_{L}(x) + g_{U}(x))(h_{L}(x) + h_{U}(x)) - \left(g_{U}(x) \cdot h_{U}(x)\right) - \left(g_{L}(x) \cdot h_{L}(x)\right) .\]

\begin{algorithm}[H]
  \linespread{1.35}\selectfont
   \caption{\textsc{Karatsuba}}\label{alg:karatsuba}
\SetAlgoLined{}
\KwInput{Two polynomials $g(x)$ and $h(x)$ of degree bounded by $n$}
\KwResult{Product of $g(x) \cdot h(x)$}
\nl{} \lIf{n = 1}{\Return{$g \cdot h$}}
Split $g(x)$ and $h(x)$ into $g_{L}(x), g_{U}(x), h_{L}(x), h_{U}(x)$.\\
\nl{} $a(x) \coloneqq \textsc{Karatsuba}(g_{U}(x),h_{U}(x))$\\
\nl{} $b(x) \coloneqq \textsc{Karatsuba}(g_{L}(x),h_{L}(x))$\\
\nl{} $c(x) \coloneqq \textsc{Karatsuba}(g_{L}(x)+g_{U}(x),h_{L}(x)+h_{U}(x))$\\
\nl{} \Return{$a(x)x^{n} + (c(x)-a(x)-b(x))x^{n/2} + b(x)$}
\end{algorithm}

Notice how the recurrence now computes only $3$ products of half size and a linear amount of operations ($T(n) = 3T(n/2) + \bigO{n}$) providing the desired sublinear running time of $T(n) = \bigO{n^{\log{3}}}$ (solving the recurrence with the Master Theorem for divide-and-conquer recurrences~\cite{masterTheorem}).

  \subsubsection{Karatsuba multiplication algorithm \texorpdfstring{$\mod f(x)$}{mod f(x)}}

    However notice this is not the most natural way of writing this recursion when working $\mod x^n \pm 1$ as all recursive calls work the same way but the last one, where a reduction has to be performed.

    Alternatively we can split $g(x) = g_{1}(x^2)x + g_{0}(x^2)$, with $g_{0}$ containing the even coefficients and $g_{1}$ containing the odd ones. This allows us to think of $g(x) \in R[x]$ of degree smaller than $n$ as $g(x,y) = g_{0}(y) + g_{1}(y)x \in R[x,y]$ of $x$-degree $1$ and $y$-degree smaller than $n/2$. It is called \emph{Dual Karatsuba} and the idea remains the same:
    \begin{align*}
      g(x) \cdot h(x) &= (g_{1}(x^2)x + g_{0}(x^2))(h_{1}(x^2)x + h_{0}(x^2)) \\
       &= \left(g_{1}(x^2) \cdot h_{1}(x^2)\right)x^2 \\
       &\phantom{{}={}}+ \left(g_{1}(x^2) \cdot h_{0}(x^2) + g_{0}(x^2) \cdot h_{1}(x^2)\right)x \\
       &\phantom{{}={}}+ g_{0}(x^2) \cdot h_{0}(x^2)
     \end{align*}

     And analogously as we did before we can write the second term with only one additional multiplication. $\left(g_{1}(x^2) \cdot h_{0}(x^2) + g_{0}(x^2) \cdot h_{1}(x^2)\right)$ is:
     \[\left(\left(g_{1}(x^2) + g_{0}(x^2)\right)\cdot\left(h_{1}(x^2) + h_{0}(x^2)\right) - g_{1}(x^2) \cdot h_{1}(x^2) - g_{0}(x^2) \cdot h_{0}(x^2) \right).\]

     \begin{algorithm}[H]
       \linespread{1.35}\selectfont
        \caption{\textsc{Karatsuba} $\pmod{x^n \pm 1}$}\label{alg:dualKaratsuba}
     \SetAlgoLined{}
     \KwInput{Two polynomials $g(x)$ and $h(x)$ of degree bounded by $n$}
     \KwResult{Product of $g(x) \cdot h(x)$}
     \lIf{n = 1}{\Return{$g \cdot h$}}
     Split $g(x)$ and $h(x)$ into $g_{0}(x), g_{1}(x), h_{0}(x), h_{1}(x)$.\\
     $a(y) \coloneqq \textsc{Karatsuba}(g_{1}(y),h_{1}(y))$\\
     $b(y) \coloneqq \textsc{Karatsuba}(g_{0}(y),h_{0}(y))$\\
     $c(y) \coloneqq \textsc{Karatsuba}(g_{0}(y)+g_{1}(y),h_{0}(y)+h_{1}(y))$\\
     \Return{$a(x^2)x^2 + (c(x^2)-a(x^2)-b(x^2))x + b(x^2) \pmod{x^n \pm 1}$}
     \end{algorithm}

     Now the reduction modulo $x^n \pm 1$ works at each level of the recursive algorithm, as $\widehat{g}(y) \cdot \widehat{h}(y) \pmod{y^{n/2} \pm 1}$ is equivalent to $\widehat{g}(x^2) \cdot \widehat{h}(x^2) \pmod{x^{n} \pm 1}$ once we change variables again. We obtain no computational advantage, but it allows us to understand it from a different perspective.

     Notice $g_{0}(y)+g_{1}(y) \equiv g(x,y) \mod x-1$ and $g_{0}(y) \equiv g(x,y) \mod x$. These ideas are considered in~\cite{Bernstein01} in order to see all these tools as part of the same framework.

\subsubsection{Faster multiplication algorithms}

Even faster algorithms can be obtained from more clever recurrences, mapping the polynomials into a different domain in a recursive way (recursively computing two transforms of half the size $T(n) = 2 T(n/2) + \bigO{n}$) where they can be efficiently multiplied in linear time, so that the final computational complexity is $\bigO{n \log n}$. Through this work we are going to focus and systematically explore the ones derived from the Fast Fourier Transform (FFT) paradigm, that is going to be extensively described in the following sections. This approach is usually referred as Number Theoretic Transformation (NTT) when working with finite fields.

The main idea of the FFT recurrence is attributed to Gauss and was fully developed by Cooley and Tukey in its seminal work~\cite{cooley1965algorithm} considering the ring of complex numbers.

Many variants have been developed since then, generalizing~\cite{cooley1965algorithm} to non-power of two bounded degree polynomials or providing additional tricks and interpretations from which we benefit, such as~\cite{STOC:fiduccia72}. An extensive and magnificently well documented survey can be found in~\cite{Bernstein01}.

Most of the work has focused on the particular case of multiplications in $\Z_{q}[x]/\left<x^{n}+1\right>$ when $q$ is prime and $x^n+1$ fully splits in linear factors. That has been studied for a while, both from a software~\cite{FSE:LMPR08} and hardware~\cite{LC:PopGun12} point of view.

We are particularly interested in the ideas presented in~\cite{EC:LyuSei18}, as they specifically discuss multiplications of polynomials in $\Z_{q}[x]/\left<x^{n}+1\right>$ when $x^n+1$ does not fully split in linear factors (a situation that happens in some lattice-based cryptographic schemes such as some commitment schemes~\cite{SCN:BDLOP18}) and the standard FFT can only be partially applied. However~\cite{EC:LyuSei18} only considers the case with a prime $q$ and briefly describes the procedure.

This technique of partially applying an FFT is sometimes called incomplete NTT~\cite{TCHES:CHKSS21} and usually interpreted like a Chinese Remainder Transform (CRT), as in~\cite{EC:LyuPeiReg13} doing Fast Chinese remaindering~\cite{gathen_gerhard_2013}.

However, most of the literature only provides some sufficient conditions that allow some particular implementation or specific abstraction of a fast multiplication algorithm that are not directly generalizable. In spite of that we present a more general framework for multiplications in $\Z_{m}[x]/\left<x^{n}+1\right>$ that would allow the reader to comprehend why some folklore assumptions are indeed necessary and why some others are not, from a mathematically rigorous, yet accessible for readers not familiarized with algebraic constructions, point of view.

\subsection{Notation and conventions}

Since our goal is to work in $R[x]/\left<f(x)\right>$, with $f$ a monic polynomial, we choose as a representative for $g(x) \in R[x]/\left<f(x)\right>$ its remainder when divided by $f(x)$, denoted by $g(x) \brem f(x)$.

We denote vectors by lower-case bold-faced roman letters and use $\log$ for the binary logarithm.

We borrow most of our notation from~\cite{Bernstein01}, and present some new definitions through \cref{sec:efficientTrans,sec:suitableSets}, that we believe are of independent interest.


\section{Pointwise product}

The main idea behind any Fast Fourier Transform multiplication technique is to compute the product of two polynomials via the pointwise product of their evaluations on certain points of the ring $R$.

It is straightforward by the definition of the product of polynomials that given a point $x_{0} \in R$ the evaluation of the product is equal to the product of the evaluations $(g \cdot h)(x_{0}) = g(x_{0}) \cdot h(x_{0})$.

This way if we compute enough evaluations (we denote this transform as $T$) then we could perform a pointwise product (denoted by $\odot$) of the evaluations and then interpolate back ($T^{-1}$) the polynomial.

\[
  \begin{array}{*{7}{c}}
 {(R[x])}^2 &\xrightarrow{\odot\circ T} &R^k &\xrightarrow{T^{-1}} &R[x] \\
   (g(x),h(x))
 &\mapsto
 &\begin{pmatrix} g(x_{0})\\ g(x_{1}) \\ \vdots \\ g(x_{k-1}) \end{pmatrix}\odot
 \begin{pmatrix} h(x_{0})\\ h(x_{1}) \\ \vdots \\ h(x_{k-1}) \end{pmatrix}
 &\mapsto
   &(g \cdot h)(x)
 \end{array}
\]

In the following sections we will explore what possibilities do we have for $R$, $k$ and $x_{0}, x_{1}, \dots,\allowbreak x_{k-1}$ so that $T^{-1}$ is well defined and both $T$ and $T^{-1}$ are efficiently computable. We are going to characterize the necessary and sufficient conditions the evaluation points have to satisfy to be able to perform these operations.

\subsection{General invertibility of \texorpdfstring{$T$}{T}}

To deal with the invertibility of transform $T$ when applied to polynomials of degree bounded by $n$ we notice it is a linear mapping and characterize it by its associated matrix

\[V =
  \begin{pmatrix}
    1      & x_{0}   & {x_{0}}^2   & \cdots  & {x_{0}}^{n-1} \\
    1      & x_{1}   & {x_{1}}^2   & \cdots  & {x_{1}}^{n-1} \\
    \vdots & \vdots  & \vdots    & \ddots & \vdots \\
    1      & x_{k-1} & {x_{k-1}}^2 & \cdots  & {x_{k-1}}^{n-1} \\
  \end{pmatrix}.
\]

This special matrix is known as a Vandermonde matrix. If we choose $k = n$ we have a square Vandermonde matrix and we can discuss its invertibility.

Since we are working with a general commutative ring with unity $R$, a matrix is invertible if and only if its determinant is invertible~\cite{alma991004101649706711}. The determinant of a Vandermonde matrix is easy to compute and has the form
\[ \det(V) = \prod_{0 \leq i < j < n}(x_{j} - x_{i}).\]
Now if $R$ is a field it is just sufficient to choose $n$ different evaluation points $x_{0}, \dots, x_{n-1}$.

If $R$ is just a commutative ring it is a necessary and sufficient condition to choose points such that their differences are invertible in $R$.

\begin{condition}[Points with invertible differences]\label{cond:invertibleDifferences}
  We say a set of points satisfies \Cref{cond:invertibleDifferences} if the difference of every pair is invertible in $R$.
\end{condition}

\begin{remark}
  Choosing $n$ evaluation points with invertible differences in $R$ is a necessary and sufficient condition for the transform $T$ to be invertible.
\end{remark}

Notice how we are talking about transforming and anti-transforming polynomials of a certain degree. Given two polynomials $g(x)$ and $h(x)$ of degrees $n$ and $n'$, since we want to recover the polynomial $g(x) \cdot h(x)$, we would have to think them as polynomials of degree smaller or equal than $n+n'$ and use $(n+n'+1) \times (n+n'+1)$ Vandermonde matrices.

\subsection{Pointwise product of evaluations modulo \texorpdfstring{$f(x)$}{f(x)}}

Given $a\in R$\footnote{
One can see that in the special cases $a = \pm 1$ we have that the product in  $R[x]/\left<x^n-1\right>$ and $R[x]/\left<x^n+1\right>$ is usually described in the literature as a cyclic/anti-cyclic convolution (denoted by $\ast$).

\begin{align*}
  (g \cdot h)(x) \prem{x^n - 1} &= g(x) \cdot h(x) \prem{x^n - 1} \\
  &= \left(\sum_{i=0}^{n-1} g_{i}x^{i}\right)\left(\sum_{j=0}^{n-1} h_{j}x^{j}\right) \prem{x^n - 1}\\
  &= \sum_{k=0}^{n-1}\left(\sum_{\substack{i,j\\i+j \equiv k \\ \mod{n}}} g_{i} \cdot h_{j}\right)x^{k}\\
  &= (g \ast h)(x)
\end{align*}

This intuition might be of independent interest as a convolution product in the regular domain is a pointwise product in the transformed domain.
}
 we can always consider polynomials in $R[x]/\left< x^n - a \right>$ as polynomials in $R[x]$ with degree strictly bounded by $n$ (using the canonical representative $\trem$), compute, as we said before, their product as a polynomial in $R[x]$ with degree strictly bounded by $2n$ via a $2n$-transform and then applying $\rem x^{n}-a$ again to obtain the representative with degree smaller than $n$.

The main issue we face when trying to use the pointwise product of evaluations technique to compute the product of two polynomials modulo $f(x)$ is that evaluation is not well defined in general as it depends on the representative we choose from the class of equivalence modulo $f(x)$.

In general, for an arbitrary $x_{0}$, it is not the same to compute $(g(x) \brem f(x))(x_{0}) \cdot (h(x) \brem f(x))(x_{0})$ than $(g(x) \cdot h(x) \brem f(x))(x_{0})$.

For this reason we have to choose specific points where evaluation is compatible with congruence classes.
If $\alpha$ is such that for any two equivalent polynomials $g(x) \equiv \widehat{g}(x) \pmod{f(x)}$ we obtain $g(\alpha) = \widehat{g}(\alpha)$ then we necessarily have $f(\alpha) = 0$ and $\alpha$ has to be a root of $f(x)$.

Let $\alpha$ be a root of $f(x)$. Then $ g(x) \equiv \widehat{g}(x) \pmod{f(x)}$ implies $g(\alpha) = \widehat{g}(\alpha)$ and therefore
\[(g(x) \brem f(x) )(\alpha) \cdot (h(x) \brem f(x))(\alpha) = (g(x) \cdot h(x) \brem f(x))(\alpha).\]

In conclusion, for the particular case with $f(x) = x^{n} - a$, where $a \in R$, $\alpha$ has to be an $n$\=/th root of $a$. By choosing $\alpha_{0},\dots,\alpha_{n-1}$ different $n$\=/th roots of $a$ with invertible differences we would be able to directly recover $g(x) \cdot h(x) \brem {x^{n} - a}$ from the pointwise product of their evaluations in $\alpha_{0},\dots,\alpha_{n-1}$.

This reduces the number of computations needed, since only $n$ evaluation points are required and no further reduction is needed to obtain the canonical representative.

\begin{condition}[Roots of $f(x)$ as points]\label{cond:evalAtRoots}
  We say a set of points in $R$ for a polynomial in $R[x]/\left<f(x)\right>$ satisfies \Cref{cond:evalAtRoots} if they are roots of $f(x)$.
\end{condition}

\begin{remark}
  Choosing $n$\=/th roots of $a$ as evaluation points is a necessary and sufficient condition for the evaluations of polynomials in $R[x]/\left<x^{n}-a\right>$ to be well defined.
\end{remark}

So far \cref{cond:invertibleDifferences,cond:evalAtRoots} are necessary and sufficient conditions to use pointwise product of evaluations as a technique to compute the product of two polynomials in $R[x]/\left<x^n-a\right>$. The next step is to study when can this be computed efficiently.


\section{Efficient transforms}\label{sec:efficientTrans}

In this section we are going to define efficient transform and anti-transform protocols from a theoretical and asymptotical point of view. Additional implementation tricks or approaches (for example, whether the recurrences are solved in an iterative or recursive way) could have an important impact to save up space or computations but are out of the scope of this article.

\subsection{Efficient evaluation of \texorpdfstring{$T$}{T}}

Once we have seen the requirements for the pointwise product of evaluations to work under each possible concerned circumstances we have to discus how to efficiently apply them.

Computing each of the $n$ evaluations individually would require $\bigO{n}$ operations, for a total of $\bigO{n^2}$. The Fast Fourier approach outperforms that computing the $n$ evaluations at the same time by means of a divide-and-conquer recursive strategy.

If we call $x_{i}$ to one of the evaluation points and we decompose the polynomial $g(x)$ into two polynomials $g_{0}(x)$ and $g_{1}(x)$ of half the size with even and odd coefficients respectively, as in Dual Karatsuba, we can write
\[
  g(x_{i})   = g_{0}(x_{i}^2) + x_{i} \cdot g_{1}(x_{i}^2).
\]

With this recursion we reduce a single polynomial evaluation of degree bounded by $n$ to two polynomial evaluations of degree bounded by $n/2$, a product in $R$ and an addition in $R$. Directly doing this would not save us any cost, as it would still take $\bigO{n}$ per evaluation.

The main idea is to choose the evaluation points $\{x_{0}, x_{1}, \dots, x_{n-1}\}$ so that the set containing their squares $\{y \; \vert \; y = {x_{i}}^2\}$ contains only $n/2$ elements (therefore we could reuse the evaluations of $g_{0}$ and $g_{1}$ on the squares). We would like that to be true recursively so we introduce the following definition, already satisfying \Cref{cond:evalAtRoots}.

\begin{definition}[Twofold set of $n$\=/th roots]
  An indexed set $\alpha_{0}, \dots, \alpha_{n-1} \in R$ (properly reindexed if required) of $n$\=/th roots of an element $a \in R$, with $n$ a power of 2, is said to be a \emph{twofold set of $n$\=/th roots} if $i \equiv j \pmod{2^{{\log(n)}-k}}$ implies ${\alpha_{i}}^{2^k} = {\alpha_{j}}^{2^k}$ for $k$ from $0$ to $\log(n)$.
\end{definition}

We can visually represent it as a full binary tree like in \Cref{fig:binarytree}. Observe the evaluation points, leafs in the tree, appear in \emph{bit-reversed order}.

\begin{figure}[ht]
 \begin{minipage}{\textwidth}
   \centering
    \begin{tikzpicture}[every node/.style={minimum size=0.85cm,circle,draw,line width=1pt,color=black},edge from parent/.style={draw,line width=1pt,<-},
      level/.style={sibling distance=60mm/#1}, level distance=4.25em,
      edge from parent path={
      (\tikzparentnode) |-   
      ($(\tikzparentnode)!0.5!(\tikzchildnode)$) -| 
      (\tikzchildnode)},
      scale=1,transform shape
      ]
      \footnotesize
      \node (a){$a$}
        child {node (a0) {$\alpha_{\,0}$}
          child {node (a00) {$\alpha_{\,00}$}
            child {node (a000) {$\alpha_{\,000}$} edge from parent}
            child {node (a100) {$\alpha_{\,100}$}  edge from parent}
          }
          child {node (a10) {$\alpha_{\,10}$}
            child {node (a010) {$\alpha_{\,010}$}}
            child {node (a110) {$\alpha_{\,110}$}}
          }
        }
        child {node (a1) {$\alpha_{\,1}$}
          child {node (a01) {$\alpha_{\,01}$}  edge from parent
            child {node (a001) {$\alpha_{\,001}$}  edge from parent}
            child {node (a101) {$\alpha_{\,101}$}  edge from parent}
          }
          child {node (a11) {$\alpha_{\,11}$}  edge from parent
            child {node (a011) {$\alpha_{\,011}$}  edge from parent}
            child {node (a111) {$\alpha_{\,111}$}  edge from parent}
          }
      };
    \end{tikzpicture}
    \caption{Twofold set of $8$\=/th roots}\label{fig:binarytree}
  \end{minipage}
\end{figure}

\begin{remark}\label{rem:twofoldSquares}
  Given $\alpha_{0}, \dots, \alpha_{n-1}$ a twofold set of $n$\=/th roots of $a \in R$ then the set $\{ {\alpha_{0}}^{2}, \dots,\allowbreak {\alpha_{n/2-1}}^{2} \}$ is a twofold set of $n/2$\=/th roots of $a$.
\end{remark}

Choosing as evaluation points a twofold set of roots $\alpha_{0}, \dots, \alpha_{n-1}$ we have that for every $0 \leq i < n/2$ the equality ${\alpha_{i}}^2 = {\alpha_{i + n/2}}^2$ holds and we can write

\begin{align*}
g(\alpha_{i})   &= g_{0}({\alpha_{i}}^{2})   + \phantom{{}^{{}+ n/2}}\alpha_{i} \cdot g_{1}({\alpha_{i}}^{2}) \\
g(\alpha_{i+n/2}) &= g_{0}({\alpha_{i}}^{2}) + \alpha_{i+n/2} \cdot g_{1}({\alpha_{i}}^{2}).\\
\end{align*}

We can use it to present our first general description of a Fast Fourier Transform (FFT) \Cref{alg:fftRootsOfA}.

\begin{algorithm}[H]
  \linespread{1.35}\selectfont
   \caption{\textsc{FFT}}\label{alg:fftRootsOfA}
\SetAlgoLined{}
\KwInput{A polynomial $g(x)$ of degree bounded by $n$ and a twofold set $\alpha_{0}, \dots, \alpha_{n-1}$ of $n$\=/th roots of $a$ }
\KwResult{Evaluations of $g(x)$ at $\alpha_{0}, \dots, \alpha_{n-1}$}
\nl{} \lIf{n = 1}{\Return{$g$}}
Split $g(x)$ into $g_{0}(x)$ and $g_{1}(x)$.\\
\nl{} $\vec{y}_{0} \coloneqq \textsc{FFT}(g_{0}(x),{\alpha_{0}}^{2}, \dots, {\alpha_{n/2-1}}^{2})$\\
\nl{} $\vec{y}_{1} \coloneqq \textsc{FFT}(g_{1}(x),{\alpha_{0}}^{2}, \dots, {\alpha_{n/2-1}}^{2})$\\
\nl{} \Return{$\vec{y}_{0} + \vec{y}_{1} \odot (\alpha_{0}, \dots, \alpha_{n/2-1}) \vert \vert \vec{y}_{0} + \vec{y}_{1} \odot (\alpha_{n/2}, \dots, \alpha_{n-1})$}
\end{algorithm}

Analyzing its computational cost now we find that computing the $n$-FFT takes as much time as computing two $n/2$-FFT plus a linear amount of products and additions in $R$ ($T(n) = 2T(n/2) + \bigO{n}$). Using again~\cite{masterTheorem} we end up with a total cost $\bigO{n\log{n}}$. Observe that now this is the total cost of the transform and not per evaluation as it was the case before.

\begin{condition}[Twofold set of roots]\label{cond:twofold}
  We say a set of points satisfies \Cref{cond:twofold} if it is a twofold set.
\end{condition}

\begin{remark}
  Choosing the evaluation points as a twofold set of $n$\=/th roots of $a \in R$ is a sufficient condition for the existence of an efficient FFT\@.
\end{remark}

Observe that, by definition, \Cref{cond:twofold} implies \Cref{cond:evalAtRoots}. However we prefer to treat it separately as it is sufficient for an efficient implementation but unnecessary for a general pointwise product method.

We are going to use the following notation. For any $i \in \{0,1,\dots,n-1\}$ let $\overline{\imath} = i + n/2 \brem n$. Analogously with $j$ and $\overline{\jmath}$.

\subsection{Efficient evaluation of \texorpdfstring{$T^{-1}$}{T⁻¹}}

In order to obtain an efficient multiplication algorithm we need not only an efficient transform algorithm but also an efficient anti-transform algorithm.

Notice beforehand that, the same way we saw in \Cref{rem:twofoldSquares} that \Cref{cond:twofold} is preserved when squaring the evaluation points we should check if the same holds with \Cref{cond:invertibleDifferences}. To do so we first require the following \Cref{lemma:opuesto}.

\begin{lemma}\label{lemma:opuesto}
  Let $\alpha_{0}, \dots, \alpha_{n-1}$ be a twofold set of $n$\=/th roots of $a$ with invertible differences.

  Then $\alpha_{\overline{\imath}} = -\alpha_{i}$.
\end{lemma}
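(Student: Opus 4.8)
The plan is to exploit the twofold property at level $k=1$ together with invertibility of the relevant difference in $R$; the argument is short and essentially amounts to a difference-of-squares factorization followed by a cancellation.

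First I would record that, since $n$ is a power of $2$ with $n \geq 2$, the indices $i$ and $\overline{\imath} = i + n/2 \brem n$ are distinct: from $0 < n/2 < n$ we get $i + n/2 \not\equiv i \pmod{n}$. Consequently $\alpha_i$ and $\alpha_{\overline{\imath}}$ are two \emph{different} members of the given set, so \Cref{cond:invertibleDifferences} guarantees that $\alpha_i - \alpha_{\overline{\imath}}$ is a unit in $R$.

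Next I would invoke the definition of a twofold set of $n$\=/th roots with $k = 1$. Because $i \equiv \overline{\imath} \pmod{n/2}$ and $n/2 = 2^{\log(n) - 1}$, the defining implication yields ${\alpha_i}^{2} = {\alpha_{\overline{\imath}}}^{2}$. Hence
\[ 0 = {\alpha_i}^{2} - {\alpha_{\overline{\imath}}}^{2} = (\alpha_i - \alpha_{\overline{\imath}})(\alpha_i + \alpha_{\overline{\imath}}). \]
Multiplying this identity by the inverse of the unit $\alpha_i - \alpha_{\overline{\imath}}$ gives $\alpha_i + \alpha_{\overline{\imath}} = 0$, that is, $\alpha_{\overline{\imath}} = -\alpha_i$.

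The only step that requires a moment of care is the verification that $i \neq \overline{\imath}$, since this is precisely what licenses the appeal to \Cref{cond:invertibleDifferences}: the difference of squares always factors, but without invertibility of $\alpha_i - \alpha_{\overline{\imath}}$ one cannot cancel it (for instance in a ring with zero divisors the conclusion could fail). Beyond this bookkeeping I expect no genuine obstacle.
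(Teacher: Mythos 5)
Your proof is correct and follows essentially the same route as the paper's: invoke the twofold property to get ${\alpha_i}^2 = {\alpha_{\overline{\imath}}}^2$, factor the difference of squares, and cancel the invertible factor $\alpha_i - \alpha_{\overline{\imath}}$. Your explicit check that $i \neq \overline{\imath}$ (so that \Cref{cond:invertibleDifferences} actually applies) is a small point of care the paper leaves implicit.
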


\begin{proof}
  Since the set satisfies \Cref{cond:twofold} we know ${\alpha_{i}}^{2} = {\alpha_{\overline{\imath}}}^{2}$. That is
  \[0 = {\alpha_{i}}^{2} - {\alpha_{\overline{\imath}}}^{2} = (\alpha_{i} - \alpha_{\overline{\imath}})(\alpha_{i} + \alpha_{\overline{\imath}}).\]
  Using that the elements have invertible differences we obtain $\alpha_{i} + \alpha_{\overline{\imath}} = 0$.
\end{proof}

\begin{proposition}[Squares of a set satisfying \Cref{cond:evalAtRoots,cond:twofold,cond:invertibleDifferences} also satisfy \Cref{cond:evalAtRoots,cond:twofold,cond:invertibleDifferences}]\label{prop:squaring}
  Let $\alpha_{0}, \dots, \alpha_{n-1}$ be a twofold set of $n$\=/th roots of $a$ with invertible differences.

  Then the set ${\alpha_{0}}^{2}, \dots, {\alpha_{n/2-1}}^{2}$ is a twofold set of $n/2$\=/th roots of $a$ with invertible differences.
\end{proposition}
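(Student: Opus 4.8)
The plan is to verify the three conditions for the squared set one at a time, reusing \Cref{rem:twofoldSquares} for the twofold property and \Cref{lemma:opuesto} for the invertible-differences property. First, \Cref{cond:evalAtRoots} and \Cref{cond:twofold} are immediate: by \Cref{rem:twofoldSquares}, if $\alpha_0,\dots,\alpha_{n-1}$ is a twofold set of $n$\=/th roots of $a$, then ${\alpha_0}^2,\dots,{\alpha_{n/2-1}}^2$ is a twofold set of $n/2$\=/th roots of $a$; in particular each ${\alpha_i}^2$ is an $n/2$\=/th root of $a$, so \Cref{cond:evalAtRoots} holds, and the twofold structure (hence \Cref{cond:twofold}) is inherited. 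So the only real content is \Cref{cond:invertibleDifferences}.

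Next I would fix $0 \le i < j < n/2$ and show ${\alpha_j}^2 - {\alpha_i}^2$ is invertible in $R$. The key algebraic identity is
\[
{\alpha_j}^2 - {\alpha_i}^2 = (\alpha_j - \alpha_i)(\alpha_j + \alpha_i),
\]
and I want to exhibit both factors on the right as differences of points in the original twofold set, so that each is invertible by hypothesis; a product of invertibles is invertible. The first factor $\alpha_j - \alpha_i$ is already a difference of two distinct points of the original set (since $i \ne j$), hence invertible. For the second factor, apply \Cref{lemma:opuesto}: since $i < n/2$, we have $\overline{\imath} = i + n/2$, and the lemma gives $\alpha_{\overline{\imath}} = -\alpha_i$, so
\[
\alpha_j + \alpha_i = \alpha_j - (-\alpha_i) = \alpha_j - \alpha_{\overline{\imath}}.
\]
Now $j$ and $\overline{\imath}$ are distinct indices in $\{0,\dots,n-1\}$: indeed $j < n/2 \le \overline{\imath}$, so $j \ne \overline{\imath}$. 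Therefore $\alpha_j - \alpha_{\overline{\imath}}$ is a difference of two distinct points of the original twofold set, hence invertible by \Cref{cond:invertibleDifferences} applied to the original set. Consequently ${\alpha_j}^2 - {\alpha_i}^2$ is a product of two invertible elements of $R$ and is invertible, which is what we needed.

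I do not anticipate a serious obstacle here, since \Cref{lemma:opuesto} has already done the delicate work. The one point to state carefully is the index bookkeeping — that $j < n/2 \le \overline{\imath} = i + n/2$ guarantees $j \ne \overline{\imath}$, so that we are genuinely invoking invertibility of a difference of \emph{distinct} points and not, vacuously, of $0$. After combining the three parts, the proposition follows, and one notes that this is exactly the inductive step needed so that \Cref{alg:fftRootsOfA} and its (yet to be described) inverse can recurse while preserving all three conditions at every level of the tree in \Cref{fig:binarytree}.
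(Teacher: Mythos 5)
Your proof is correct and follows essentially the same route as the paper: both reduce the twofold/roots part to the definition (your \Cref{rem:twofoldSquares}) and then factor the difference of squares as $(\alpha_j-\alpha_i)(\alpha_j-\alpha_{\overline{\imath}})$ via \Cref{lemma:opuesto} to exhibit it as a product of invertible differences of the original points. Your extra remark that $j<n/2\le\overline{\imath}$ guarantees the two indices are distinct is a small but welcome piece of bookkeeping the paper leaves implicit.
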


\begin{proof}
  From the definition of a twofold set it directly follows that the set of squares ${\alpha_{0}}^{2}, \dots,\allowbreak {\alpha_{n/2-1}}^{2}$ is a twofold set of $n/2$\=/th roots of $a$.

  We only need to check if the differences among the squares are still invertible.
  \[ {\alpha_{i}}^{2} - {\alpha_{j}}^{2} = (\alpha_{i} - \alpha_{j})(\alpha_{i} + \alpha_{j}) = (\alpha_{i} - \alpha_{j})(\alpha_{i} - \alpha_{\overline{\jmath}}).\]

  Using \Cref{lemma:opuesto} we have seen the differences among the squares are products of differences among original elements, invertible by hypothesis, implying the squares also satisfy the conditions.
\end{proof}

After these preliminaries one can define the anti-transform from a constructive point of view by reversing the transform algorithm or explicitating the inverse of the transform matrix. However, to get a deeper insight we are going to describe it using the language of Lagrange interpolation.

Given an indexed set of points ${\{\alpha_{i}\}}_{i=0}^{n-1}$ with invertible differences and the evaluations of a polynomial in such points ${\{g(\alpha_{i})\}}_{i=0}^{n-1}$ we can recover the original polynomial $g(x)$ using Lagrange polynomials $l_{i}^{{\{\alpha_{j}\}}_{j=0}^{n-1}}(x)$ as
\[g(x) = \sum_{i=0}^{n-1}g(\alpha_{i})l_{i}^{{\{\alpha_{j}\}}_{j=0}^{n-1}}(x),\quad l_{i}^{{\{\alpha_{j}\}}_{j=0}^{n-1}}(x) = \prod_{\substack{j=0\\j\neq i}}^{n-1}\frac{x-\alpha_{j}}{\alpha_{i}-\alpha_{j}}.\]

The key point is to observe that, due to the particular requirements of our set of evaluation points, that is \Cref{cond:invertibleDifferences,cond:evalAtRoots,cond:twofold}, our Lagrange polynomials factorize in a special way.

Using \Cref{lemma:opuesto} we can see how the Lagrange polynomial splits.
\begin{align*}
  l_{i}^{{\{\alpha_{j}\}}_{j=0}^{n-1}}(x)
& = \prod_{\substack{j=0\\j\neq i}}^{n-1}\frac{x-\alpha_{j}}{\alpha_{i}-\alpha_{j}} \\
& = \frac{x-\alpha_{\overline{\imath}}}{\alpha_{i}-\alpha_{\overline{\imath}}}\prod_{\substack{j=0\\j\not\equiv i \\ \pmod{n/2}}}^{n/2-1}\left(\frac{x-\alpha_{j}}{\alpha_{i}-\alpha_{j}}\right)\left(\frac{x-\alpha_{\overline{\jmath}}}{\alpha_{i}-\alpha_{\overline{\jmath}}}\right) \\
& = \frac{x-\alpha_{\overline{\imath}}}{\alpha_{i}-\alpha_{\overline{\imath}}}\prod_{\substack{j=0\\j\not\equiv i \\ \pmod{n/2}}}^{n/2-1} \frac{x^{2}-{\alpha_{j}}^{2}}{{\alpha_{i}}^{2}-{\alpha_{j}}^{2}} \\
& = l_{i}^{\{\alpha_{i},\alpha_{\overline{\imath}}\}}(x)  l_{i \rem n/2}^{{\{{\alpha_{j}}^{2}\}}_{j=0}^{n/2-1}}(x^{2}).
\end{align*}

It is crucial to note that with a twofold set $l_{\overline{\imath} \rem n/2}^{{\{{\alpha_{j}}^{2}\}}_{j=0}^{n/2-1}}(x) = l_{i \rem n/2}^{{\{{\alpha_{j}}^{2}\}}_{j=0}^{n/2-1}}(x)$.

Then we can write
\begin{align*}g(x) &=
\sum_{i=0}^{n-1}g(\alpha_{i})l_{i}^{{\{\alpha_{j}\}}_{j=0}^{n-1}}(x) \\
 &= \sum_{i=0}^{n-1}g(\alpha_{i})l_{i}^{\{\alpha_{i},\alpha_{\overline{\imath}}\}}(x)  l_{i\rem n/2}^{{\{{\alpha_{j}}^{2}\}}_{j=0}^{n/2-1}}(x) \\
 &=\sum_{i=0}^{n/2-1}\left( g(\alpha_{i})l_{i}^{\{\alpha_{i},\alpha_{\overline{\imath}}\}}(x)+g(\alpha_{\overline{\imath}})l_{\overline{\imath}}^{\{\alpha_{i},\alpha_{\overline{\imath}}\}}(x) \right)
 l_{i}^{{\{{\alpha_{j}}^{2}\}}_{j=0}^{n/2-1}}(x^{2}).
\end{align*}

Once we have this decomposition the advantage of this language is that it allows us to interpret it. We were considering $g(x)$ as $g_{0}(x^{2}) + x g_{1}(x^{2})$.
Polynomials $l_{i}^{{\{{\alpha_{j}}^{2}\}}_{j=0}^{n/2-1}}(x)$ would help us interpolate $g_{0}(x)$ and $g_{1}(x)$ if we had their images $\{g_{0}({\alpha_{i}}^{2})\}$ and $\{g_{1}({\alpha_{i}}^{2})\}$. However the images we have are $\{g(\alpha_{i})\}$.
But then $l_{i}^{\{\alpha_{i},\alpha_{\overline{\imath}}\}}(x)$ and $l_{\overline{\imath}}^{\{\alpha_{i},\alpha_{\overline{\imath}}\}}(x)$ are precisely the polynomials that interpolate $g_{0}({\alpha_{i}}^{2}) + xg_{1}({\alpha_{i}}^{2})$ (a polynomial of degree $1$ that has the desired evaluations as coefficients) from $g(\alpha_{i})$ and $g(\alpha_{\overline{\imath}})$.

As we are going to use it later lets explicitate
\[g(\alpha_{i})l_{i}^{\{\alpha_{i},\alpha_{\overline{\imath}}\}}(x)+g(\alpha_{\overline{\imath}})l_{\overline{\imath}}^{\{\alpha_{i},\alpha_{\overline{\imath}}\}}(x)
= \alpha_{i}\left(\frac{g(\alpha_{i})+g(\alpha_{\overline{\imath}})}{\alpha_{i}- \alpha_{\overline{\imath}}} \right)
+ x \left(\frac{g(\alpha_{i})-g(\alpha_{\overline{\imath}})}{\alpha_{i}- \alpha_{\overline{\imath}}} \right).\]

This can be used to build an efficient interpolation algorithm. From the $n$ evaluations of a polynomial $g$ of degree bounded by $n$ at points ${\{\alpha_{i}\}}_{i=0}^{n-1}$ we can recover the evaluations of polynomials $g_{0}$ and $g_{1}$ at points ${\{{\alpha_{i}}^{2}\}}_{i=0}^{n/2-1}$ (this is done with interpolations of polynomials of degree bounded by $2$, so each requires a constant time and we need a total of $\mathcal{O}(n)$ operations).
Then we use them to interpolate $g_{0}$ and $g_{1}$, each of them polynomials of degree bounded by $n/2$ belonging to $R[x]/\left<x^{n/2}-a\right>$. That is $T(n)= 2T(n/2) + \mathcal{O}(n)$, and we end up again achieving $T(n) = \bigO{n\log{n}}$.

\begin{algorithm}[H]
  \linespread{1.35}\selectfont
   \caption{\textsc{IFFT}}\label{alg:ifft}
\SetAlgoLined{}
\KwInput{A vector of evaluations $\vec{y}$ of size $n$ and a twofold set $\alpha_{0},\dots,\alpha_{n-1}$ of $n$\=/th roots of $a$ with invertible differences}
\KwResult{Coefficients of a polynomial $g(x)$ interpolating $\vec{y}$ at $\alpha_{0},\dots,\alpha_{n-1}$}
\nl{} \lIf{n = 1}{\Return{$y$}}
\nl{} \For{$i \gets 0$ \KwTo{} $n/2-1$}{
  $\vec{y}_{0}[i] \coloneqq \alpha_{i}\left(\frac{\vec{y}[i]+\vec{y}[i+n/2]}{\alpha_{i}- \alpha_{i+n/2}} \right)$\\
  $\vec{y}_{1}[i] \coloneqq \left(\frac{\vec{y}[i]-\vec{y}[i+n/2]}{\alpha_{i}- \alpha_{i+n/2}} \right)$
}
\nl{} $g_{0}(x) \coloneqq \textsc{IFFT}(\vec{y}_{0},{\alpha_{0}}^{2},\dots, {\alpha_{n/2-1}}^{2})$\\
\nl{} $g_{1}(x) \coloneqq \textsc{IFFT}(\vec{y}_{1},{\alpha_{0}}^{2},\dots, {\alpha_{n/2-1}}^{2} )$\\
\nl{} \Return{$g_{0}(x^{2})+xg_{1}(x^{2})$}
\end{algorithm}

\subsection{Efficient multiplication algorithm in \texorpdfstring{$R[x]/\left< x^n - a\right>$}{R[x]/<xⁿ-a>}}

Combining both \Cref{alg:fftRootsOfA,alg:ifft} we describe in \Cref{alg:efficientMultiplication} an efficient multiplication algorithm in $R[x]/\left< x^n - a \right>$.

\begin{algorithm}[H]
  \linespread{1.35}\selectfont
   \caption{\textsc{Efficient FFT Multiplication}}\label{alg:efficientMultiplication}
\SetAlgoLined{}
\KwInput{Two polynomials $g(x)$, $h(x)$ of degree bounded by $n$}
\KwAuxiliary{A twofold set $\alpha_{0},\dots,\alpha_{n-1}$ of $n$\=/th roots of $a$ with invertible differences}
\KwResult{The product $(g \cdot h)(x)$ of $g(x)$ and $h(x)$ in $R[x]/\left< x^n - a \right>$}
\nl{} $\vec{g} \coloneqq \textsc{FFT}(g(x),\alpha_{0}, \dots, \alpha_{n-1})$\\
\nl{} $\vec{h} \coloneqq \textsc{FFT}(h(x),\alpha_{0}, \dots, \alpha_{n-1})$\\
\nl{} $\vec{f} \coloneqq \vec{g} \odot \vec{h}$\\
\nl{} $f(x) \coloneqq \textsc{IFFT}(\vec{f},\alpha_{0}, \dots, \alpha_{n-1})$\\
\Return{$f(x)$}
\end{algorithm}

Thereby our work is to study the existence of sets of evaluation points satisfying \Cref{cond:twofold,cond:invertibleDifferences,cond:evalAtRoots} and how to find them in our desired $R$.


\section{Characterization of suitable sets of evaluation points in the ring \texorpdfstring{$\Z_{m}[x]/\left<x^n-a\right>$}{ℤₘ[x]/<xⁿ-a>}}\label{sec:suitableSets}

In this section we focus on $\Z_{m}$ and study the relations among the given conditions to see that these are precisely the required notions and provide necessary and sufficient conditions for the existence of proper evaluation sets.

We can start certifying that \Cref{cond:twofold,cond:evalAtRoots,cond:invertibleDifferences} are indeed independent in general.

\begin{proposition}[\Cref{cond:twofold} does not imply \Cref{cond:invertibleDifferences} in $\Z_{m}$ if $m$ is not a power of a prime]

  Let $\alpha_{0}, \dots, \alpha_{n-1}$ be a twofold set of $n$\=/th roots of $a$ in $\Z_{m}$, where $m$ is not a power of a prime. There is a twofold set of $n$\=/th roots of $a$ in $\Z_{m}$ without invertible differences.
\end{proposition}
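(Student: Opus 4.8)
The key point is that $m$ not being a power of a prime is exactly what is needed to split $m$ into two coprime nontrivial factors, and the construction then \emph{freezes} the given twofold set along one of these two factors. First I would fix a prime $p \mid m$ and set $m_{1} = p^{v_{p}(m)}$ and $m_{2} = m/m_{1}$, so that $\gcd(m_{1},m_{2}) = 1$ and, crucially, $m_{1}, m_{2} > 1$; the inequality $m_{2} > 1$ holds precisely because $m$ has at least two distinct prime divisors. By the Chinese Remainder Theorem we have a ring isomorphism $\Z_m \cong \Z_{m_1} \times \Z_{m_2}$, under which I write the given roots as $\alpha_{i} \leftrightarrow (\beta_{i}, \gamma_{i})$ with $\beta_{i} \in \Z_{m_1}$ and $\gamma_{i} \in \Z_{m_2}$.

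Next I would define a new indexed set by $\alpha_{i}' \leftrightarrow (\beta_{i}, \gamma_{0})$ for $i = 0, \dots, n-1$: keep the first CRT coordinate of the original set, and make the second coordinate constantly equal to $\gamma_{0}$. Because the CRT map is a ring isomorphism, both defining properties transport coordinatewise. Reducing $\alpha_{i}^{\,n} = a$ modulo $m_{1}$ and $\alpha_{0}^{\,n} = a$ modulo $m_{2}$ gives $(\alpha_{i}')^{n} \leftrightarrow (\beta_{i}^{\,n}, \gamma_{0}^{\,n}) = (a \bmod m_{1},\, a \bmod m_{2}) \leftrightarrow a$, so every $\alpha_{i}'$ is an $n$\=/th root of $a$. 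For the twofold condition, whenever $i \equiv j \pmod{2^{\log n - k}}$ the original set gives $\alpha_{i}^{2^k} = \alpha_{j}^{2^k}$, hence $\beta_{i}^{2^k} = \beta_{j}^{2^k}$ after reduction mod $m_{1}$, while the second coordinates are all $\gamma_{0}$; therefore $(\alpha_{i}')^{2^k} = (\alpha_{j}')^{2^k}$. Thus $\alpha_{0}', \dots, \alpha_{n-1}'$ is again a twofold set of $n$\=/th roots of $a$ in $\Z_m$.

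Finally I would observe that this new set violates \Cref{cond:invertibleDifferences}. For $n \geq 2$ (the case $n = 1$ being degenerate, since then there are no pairs of points to consider), the difference $\alpha_{0}' - \alpha_{1}' \leftrightarrow (\beta_{0} - \beta_{1}, 0)$ reduces to $0$ modulo $m_{2}$, so it is divisible by $p$ and hence is a zero divisor — in particular not a unit — in $\Z_m$. Consequently the difference of this particular pair is non-invertible, so $\alpha_{0}', \dots, \alpha_{n-1}'$ is a twofold set of $n$\=/th roots of $a$ in $\Z_m$ without invertible differences, as claimed.

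I do not expect a genuine obstacle here. The only care required is to check that both defining properties (being an $n$\=/th root of $a$, and the twofold congruence-to-equality implication) survive the CRT isomorphism, which is automatic because that isomorphism is a ring homomorphism and both conditions are polynomial identities in the $\alpha_{i}$. The hypothesis that $m$ is not a power of a prime is used exactly once and essentially: it is what forces $m_{2} > 1$, which is what makes the frozen second coordinate fail to be invertible.
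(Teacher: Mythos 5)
Your proof is correct and is essentially the paper's own argument: the paper likewise decomposes $m=pq$ into coprime proper factors and defines $\alpha'_{i} \equiv \alpha_{i} \pmod{p}$, $\alpha'_{i} \equiv \alpha_{0} \pmod{q}$, which is exactly your frozen-coordinate CRT construction spelled out in more detail. One trivial slip: the difference $\alpha_{0}'-\alpha_{1}'$ vanishes modulo $m_{2}$, so it is divisible by the prime factors of $m_{2}$ (not by $p$, which divides $m_{1}$); the conclusion that it is not a unit is unaffected.
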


\begin{proof}
  Decomposing $m=pq$ with $p$ and $q$ coprime proper factors we can always construct another twofold set defining $\alpha'_{i} \equiv \alpha_{i} \pmod{p}$ but $\alpha'_{i} \equiv \alpha_{0} \pmod{q}$. It would be a twofold set but none of their differences would be invertible.
\end{proof}

When the modulus is a power of a prime ${p^e}$ invertibility comes from being different modulo $p$, which is not implied in general by being different modulo $p^{e}$. We have first to further characterize $n$\=/th roots in $\Z_{p^e}$ when $p$ is prime to address this particular case.

\begin{theorem}[Hensel's lemma as in Theorem~2.23 from~\cite{niven}]\label{thm:hensel}

  Suppose that $f(x)$ is a polynomial with integral coefficients. If $f(x_{0}) \equiv 0 \pmod{p^e}$ and $f'(x_{0}) \not\equiv 0 \pmod{p}$, then there is a unique $t \pmod{p}$ such that $f(x_{0} + tp^e) \equiv 0 \pmod{p^{e+1}}$.
\end{theorem}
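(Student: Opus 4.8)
The plan is to expand $f$ in a finite Taylor series about the integer $x_0$. Writing $f(x) = \sum_k a_k x^k$ with $a_k \in \Z$ and substituting $x = x_0 + y$, the binomial theorem yields $f(x_0 + y) = \sum_{k \geq 0} c_k y^k$ with $c_k = \sum_{j \geq k} \binom{j}{k} a_j x_0^{\,j-k}$. The first observation I would record is that every $c_k$ is an integer, since binomial coefficients are integers and $x_0, a_j \in \Z$; moreover $c_0 = f(x_0)$ and $c_1 = \sum_{j\geq 1} j a_j x_0^{\,j-1} = f'(x_0)$. Thus, entirely within $\Z[y]$, we have $f(x_0 + y) = f(x_0) + f'(x_0)\,y + y^2 r(y)$ for some $r(y) \in \Z[y]$.

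Next I would substitute $y = t p^e$ for an integer parameter $t$ and reduce modulo $p^{e+1}$. The term $y^2 r(y)$ is divisible by $p^{2e}$, and since $e \geq 1$ we have $2e \geq e+1$, so that term vanishes modulo $p^{e+1}$. Hence
\[ f(x_0 + t p^e) \equiv f(x_0) + f'(x_0)\, t\, p^e \pmod{p^{e+1}}. \]
By hypothesis $p^e \mid f(x_0)$, so write $f(x_0) = p^e s$ with $s \in \Z$; the congruence becomes $f(x_0 + t p^e) \equiv p^e\left(s + f'(x_0) t\right) \pmod{p^{e+1}}$.

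Therefore $f(x_0 + t p^e) \equiv 0 \pmod{p^{e+1}}$ holds if and only if $p \mid s + f'(x_0) t$, i.e. if and only if $t$ solves the linear congruence $f'(x_0)\, t \equiv -s \pmod p$. Since $p$ is prime and $f'(x_0) \not\equiv 0 \pmod p$, the coefficient $f'(x_0)$ is invertible modulo $p$, so this congruence has a unique solution $t \equiv -s\, f'(x_0)^{-1} \pmod p$. This gives exactly one residue $t \pmod p$ with $f(x_0 + tp^e) \equiv 0 \pmod{p^{e+1}}$, proving the theorem.

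There is no serious obstacle here: the only point needing a little care is the integrality of the Taylor coefficients $c_k$ — equivalently, that $f(x_0 + y) - f(x_0) - f'(x_0)y$ lies in $y^2\Z[y]$ — and it is precisely this, together with the hypothesis $e \geq 1$, that lets us discard the higher-order terms modulo $p^{e+1}$ and reduce to a single linear congruence over the field $\Z/p\Z$.
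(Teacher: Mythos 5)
Your proof is correct. Note that the paper does not prove this statement itself: it imports it verbatim as Theorem~2.23 of Niven--Zuckerman--Montgomery, so there is no internal proof to compare against. Your argument --- expanding $f(x_0+y)$ as $f(x_0)+f'(x_0)y+y^2r(y)$ with $r\in\Z[y]$, substituting $y=tp^e$, discarding the $p^{2e}$ term since $2e\ge e+1$, and solving the resulting linear congruence $f'(x_0)t\equiv -f(x_0)/p^e \pmod{p}$ --- is the standard one and is precisely the proof in the cited reference; it also yields exactly the explicit lifting formula $x_{e+1}\equiv x_e - f(x_e)\overline{f'(x_1)} \pmod{p^{e+1}}$ that the paper later extracts from that reference for its root-construction algorithm.
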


In our case we are particularly interested in $f(x) = x^n-a$, with $n$ a power of $2$, so $f'(x)=nx^{n-1}$. Since our solutions are $n$\=/th roots of $a$ (when considered modulo $p^{e}$ and therefore also modulo $p$) $n {x_{0}}^{n-1} \not\equiv 0 \pmod{p}$ as long as $p \neq 2$ and $a \neq 0$.

\begin{corollary}\label{cor:rootsCorrespondence}
  There is a one to one correspondence of $n$\=/th roots in $\Z_{p^e}$ and in $\Z_{p}$, where $p$ is an odd prime.
\end{corollary}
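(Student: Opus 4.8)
The plan is to exhibit the asserted correspondence explicitly as the reduction-modulo-$p$ map and to show it is a bijection by lifting roots one prime power at a time with Hensel's lemma.

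First I would set up the map. Write $f(x) = x^n - a$ and let $S_e = \{x \in \Z_{p^e} : x^n \equiv a \pmod{p^e}\}$ and $S_1 = \{x \in \Z_p : x^n \equiv a \pmod p\}$ denote the two sets of $n$\=/th roots. Reduction modulo $p$ obviously maps $S_e$ into $S_1$ and is well defined, so it suffices to prove this map is both injective and surjective; equivalently, that for every $x_0 \in S_1$ there is exactly one element of $S_e$ congruent to $x_0$ modulo $p$.

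Next I would check that the hypothesis of \Cref{thm:hensel} holds along the whole lifting chain. As established in the discussion preceding the statement, $a$ is a unit modulo $p$, so any $x_0 \in S_1$ satisfies $x_0 \not\equiv 0 \pmod p$; since $p$ is odd and $n$ is a power of $2$ we also have $p \nmid n$, whence $f'(x_0) = n x_0^{n-1} \not\equiv 0 \pmod p$. The key observation is that this nondegeneracy depends only on $x_0$ modulo $p$, hence holds verbatim for every integer congruent to $x_0$ modulo $p$, in particular for every partial lift constructed below. Then comes the induction: starting from $x_0 \in S_1$, build $x_1, \dots, x_{e-1}$ with $f(x_j) \equiv 0 \pmod{p^{j+1}}$ and $x_j \equiv x_0 \pmod p$ by applying \Cref{thm:hensel} with exponent $j$ to $x_{j-1}$, which yields a \emph{unique} residue $t \pmod p$ with $f(x_{j-1} + t p^{j}) \equiv 0 \pmod{p^{j+1}}$; set $x_j = x_{j-1} + t p^{j}$, still $\equiv x_0 \pmod p$. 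Existence of $x_{e-1}$ gives surjectivity of reduction, and uniqueness of each $t$ forces any two elements of $S_e$ lying above $x_0$ to agree modulo $p, p^2, \dots, p^e$ in turn, hence to coincide, giving injectivity. Therefore reduction modulo $p$ is a bijection $S_e \to S_1$.

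I expect the only genuine subtlety to be the point just emphasized: verifying that the condition $f'(x_0) \not\equiv 0 \pmod p$ demanded by \Cref{thm:hensel} actually holds — this is exactly where the hypotheses "$p$ odd'' and "$a$ a unit modulo $p$'' enter — and that it is inherited by every intermediate lift because all of them are congruent modulo $p$. (For $p = 2$ the argument collapses precisely because $f'(x_0) = n x_0^{n-1}$ is even, so no such correspondence can be expected in general.) Everything else is a routine induction on the exponent $e$.
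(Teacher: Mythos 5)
Your proposal is correct and follows essentially the same route as the paper: reduction modulo $p$ in one direction and $e-1$ iterated applications of \Cref{thm:hensel} in the other, with the nondegeneracy $f'(x_0)=nx_0^{n-1}\not\equiv 0\pmod p$ guaranteed by $p$ odd and $a$ a unit. You simply spell out details the paper leaves implicit — that the derivative condition is inherited by every intermediate lift because it only depends on the residue modulo $p$, and that the uniqueness clause of Hensel's lemma is what yields injectivity.
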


\begin{proof}
  On the one hand we can see each root in $\Z_{p^e}$ as a root in $\Z_{p}$ by applying $\prem{p}$.

  On the other hand one just needs to apply \Cref{thm:hensel} iteratively $e-1$ times from $\Z_{p}$ to $\Z_{p^e}$.
\end{proof}

In particular this implies the order, as elements of the group, is preserved given that powers of a root are uniquely lifted to powers of the lifted root.

\begin{remark}
  We omit here the case $p=2$ as in the following sections we are going to see that other conditions forbid this particular case.
\end{remark}

\begin{proposition}[\Cref{cond:evalAtRoots} implies \Cref{cond:invertibleDifferences} in $\Z_{m}$ if $m$ is a power of an odd prime]

  Let $m=p^e$, with $p$ an odd prime, and let $\alpha_{0}, \dots, \alpha_{n-1}$ be $n$ different $n$\=/th roots of $a$ in $\Z_{p^e}$. Then $\alpha_{i}-\alpha_{j}$ is invertible in $\Z_{p^e}$ for all $i \neq j$.
\end{proposition}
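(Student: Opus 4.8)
The plan is to reduce the statement to the claim that the $n$ roots $\alpha_{0},\dots,\alpha_{n-1}$ remain pairwise distinct after reduction modulo $p$. Recall first that an element of $\Z_{p^e}$ is invertible if and only if it is not divisible by $p$; hence $\alpha_{i}-\alpha_{j}$ is invertible in $\Z_{p^e}$ precisely when $\alpha_{i}\not\equiv\alpha_{j}\pmod{p}$. So it suffices to show that the reduction map $\Z_{p^e}\to\Z_{p}$ does not identify any two of the $\alpha_{i}$.

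For this I would invoke \Cref{cor:rootsCorrespondence}: reduction modulo $p$ gives a bijection between the $n$\=/th roots of $a$ in $\Z_{p^e}$ and the $n$\=/th roots of $a$ in $\Z_{p}$ (this is exactly where we use that $p$ is odd, so that Hensel's lemma applies to $f(x)=x^{n}-a$, whose derivative $f'(x)=nx^{n-1}$ is nonzero modulo $p$ at every root of $a$). In particular that map is injective on $n$\=/th roots of $a$; since $\alpha_{i}\neq\alpha_{j}$ in $\Z_{p^e}$ for $i\neq j$ while both are $n$\=/th roots of $a$, we conclude $\alpha_{i}\not\equiv\alpha_{j}\pmod{p}$. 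Combining this with the previous paragraph, $\alpha_{i}-\alpha_{j}$ is a unit in $\Z_{p^e}$ for all $i\neq j$, which is the assertion.

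I do not expect a genuinely hard step: the content has already been carried out in \Cref{cor:rootsCorrespondence}, and what remains is only the injectivity of the Hensel lift. If one preferred to avoid quoting the corollary, the work would instead go into a $p$\=/adic valuation computation: supposing $\alpha_{i}\equiv\alpha_{j}\pmod{p}$, write $\alpha_{j}=\alpha_{i}+\delta$ with $k:=v_{p}(\delta)\geq 1$, expand $a=\alpha_{j}^{n}=\alpha_{i}^{n}+n\alpha_{i}^{n-1}\delta+\binom{n}{2}\alpha_{i}^{n-2}\delta^{2}+\cdots$, and note that the linear term $n\alpha_{i}^{n-1}\delta$ has $p$\=/adic valuation exactly $k$ (this uses $p$ odd, $n$ a power of $2$, and $\alpha_{i}$ a unit modulo $p$), whereas every subsequent term has valuation at least $2k$; forcing the whole sum to vanish modulo $p^{e}$ then yields $k\geq e$, that is $\alpha_{i}=\alpha_{j}$ in $\Z_{p^e}$, a contradiction. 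Either way the crux is the same: uniqueness in Hensel's lemma (which needs $p$ odd so that $f'=nx^{n-1}$ is nonzero modulo $p$) forces two $n$\=/th roots of $a$ agreeing modulo $p$ to agree modulo $p^{e}$, hence to coincide in $\Z_{p^e}$.
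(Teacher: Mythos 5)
Your proof is correct and follows essentially the same route as the paper: both reduce invertibility of $\alpha_{i}-\alpha_{j}$ in $\Z_{p^e}$ to the roots being distinct modulo $p$, and obtain that distinctness from the bijection of \Cref{cor:rootsCorrespondence} (i.e.\ uniqueness in Hensel's lemma for $f(x)=x^{n}-a$ with $f'$ a unit at each root). Your optional $p$\=/adic valuation argument is a valid self-contained substitute for quoting the corollary, but the core idea is identical.
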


\begin{proof}
  Every $\alpha_{i}$ is a root of $f(x) = x^n-a$ when considered modulo $p$. By the previous corollary we have $\alpha_{i} \not\equiv \alpha_{j} \pmod{p^e}$ implies $\alpha_{i} \not\equiv \alpha_{j} \pmod{p}$. Therefore $\gcd(\alpha_{i} - \alpha_{j},p)=1$ as we wanted.
\end{proof}

\begin{proposition}[\Cref{cond:twofold} implies \Cref{cond:invertibleDifferences} in $\Z_{m}$ if $m$ is a power of an odd prime]
  Let $m=p^e$, with $p$ an odd prime, and let $\alpha_{0}, \dots, \alpha_{n-1}$ be a twofold set of $n$\=/th roots of $a$ in $\Z_{p^e}$. Then $\alpha_{i}-\alpha_{j}$ is invertible in $\Z_{p^e}$ for all $i \neq j$.
\end{proposition}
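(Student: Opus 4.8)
The plan is to deduce this statement directly from the preceding Proposition (that \Cref{cond:evalAtRoots} implies \Cref{cond:invertibleDifferences} in $\Z_{m}$ when $m$ is a power of an odd prime), rather than re-running the Hensel argument. The point is simply that a twofold set of $n$\=/th roots of $a$ is, in particular, a collection of $n$ pairwise distinct $n$\=/th roots of $a$: being $n$\=/th roots of $a$ is part of the definition of a twofold set, and distinctness is built into the word ``set'' (which is also why \Cref{cond:twofold} implies \Cref{cond:evalAtRoots}, as already noted). Hence $\alpha_{0},\dots,\alpha_{n-1}$ are $n$ different $n$\=/th roots of $a$ in $\Z_{p^{e}}$, which are exactly the hypotheses of the previous proposition, and we conclude at once that $\alpha_{i}-\alpha_{j}$ is invertible for all $i \neq j$.

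For the reader's convenience I would also recall the mechanism that makes the previous proposition work, since that is where the hypothesis ``$p$ odd'' is used. Each $\alpha_{i}$ is also a root of $x^{n}-a$ modulo $p$. Since $n$ is a power of $2$ and $p$ is odd, $n$ is a unit modulo $p$, and since $a \not\equiv 0$ the derivative $n x^{n-1}$ shares no root with $x^{n}-a$ modulo $p$; thus $x^{n}-a$ is separable over $\Z_{p}$. By \Cref{cor:rootsCorrespondence} the reduction map is a bijection between the $n$\=/th roots of $a$ in $\Z_{p^{e}}$ and those in $\Z_{p}$, so $\alpha_{i} \neq \alpha_{j}$ forces $\alpha_{i} \not\equiv \alpha_{j} \pmod{p}$, i.e.\ $\gcd(\alpha_{i}-\alpha_{j},p)=1$, i.e.\ $\alpha_{i}-\alpha_{j}$ is a unit in $\Z_{p^{e}}$.

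I do not expect a genuine obstacle here: the statement is essentially a repackaging of the previous proposition together with the implication \Cref{cond:twofold} $\Rightarrow$ \Cref{cond:evalAtRoots}. The only subtlety worth flagging is that distinctness of the $n$ labelled elements is essential — an indexed family of $n$\=/th roots allowing repetitions would violate the conclusion (for instance the family $(1,1)$ of square roots of $1$ in $\Z_{9}$) — so the argument does rely on reading a twofold set as a set of $n$ distinct points, consistently with the way these points are used for evaluation and interpolation.
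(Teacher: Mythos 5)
Your proposal is correct and takes essentially the same route as the paper, whose proof is the one-line chain ``\Cref{cond:twofold} implies \Cref{cond:evalAtRoots} and \Cref{cond:evalAtRoots} implies \Cref{cond:invertibleDifferences}'', i.e.\ exactly your reduction to the preceding proposition. Your extra remarks (recalling the Hensel/separability mechanism and flagging that distinctness of the $n$ points is what makes the reduction legitimate) only make explicit what the paper leaves implicit.
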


\begin{proof}
  \Cref{cond:twofold} implies \Cref{cond:evalAtRoots} and \Cref{cond:evalAtRoots} implies \Cref{cond:invertibleDifferences}.
\end{proof}

Working with an arbitrary modulus $m$, that has $m={p_{1}}^{e_1} \dots {p_{k}}^{e_k}$ as its prime decomposition, we can completely determine any $d \in \Z_{m}$, via the Chinese Remainder Theorem (CRT), from $d_{(i)}$ such that
\begin{align*}
  d_{(1)} & \equiv d \pmod{{p_{1}}^{e_1}}\\
  d_{(2)} & \equiv d \pmod{{p_{2}}^{e_2}}\\
  & \vdotswithin{\equiv} \\
  d_{(k)} & \equiv d \pmod{{p_{k}}^{e_k}}.
\end{align*}

Using this representation we can prove the following theorem.

\begin{theorem}[\Cref{cond:evalAtRoots,cond:invertibleDifferences} hold if and only if \Cref{cond:evalAtRoots} holds modulo every ${p_{j}}^{e_j}$]

Let $\alpha_{0}, \dots, \alpha_{n-1} \in \Z_{m}$ be a set of different $n$\=/th roots of $a\in \Z_{m}$, and let $m={p_{1}}^{e_1} \dots {p_{k}}^{e_k}$ be the prime decomposition of an odd module $m$.

The differences among these elements are invertible modulo $m$ if and only if all the elements are still different when considered modulo any of the ${p_{j}}^{e_j}$.
\end{theorem}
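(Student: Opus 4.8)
The plan is to split the biconditional and, in both directions, pass between the modulus $m$ and its prime-power factors ${p_{j}}^{e_j}$ via the Chinese Remainder Theorem, so that the only substantial argument lives in the already-settled case of a power of an odd prime.

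For the forward implication, suppose every difference $\alpha_{i}-\alpha_{j}$ (with $i\neq j$) is invertible modulo $m$. Then $\gcd(\alpha_{i}-\alpha_{j},m)=1$, and since each $p_{j}$ divides $m$ we get $p_{j}\nmid(\alpha_{i}-\alpha_{j})$; a fortiori $\alpha_{i}\not\equiv\alpha_{j}\pmod{{p_{j}}^{e_j}}$, since a congruence modulo ${p_{j}}^{e_j}$ would force a congruence modulo $p_{j}$. Hence the $\alpha_{i}$ remain pairwise distinct modulo each ${p_{j}}^{e_j}$. This direction needs nothing beyond elementary divisibility.

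For the converse, suppose the $\alpha_{i}$ are pairwise distinct modulo each ${p_{j}}^{e_j}$. Fix $j$ and reduce modulo ${p_{j}}^{e_j}$: we obtain $n$ \emph{different} $n$\=/th roots of $a$ in $\Z_{{p_{j}}^{e_j}}$, because reduction keeps them $n$\=/th roots of $a$ and, by hypothesis, keeps them distinct. By \Cref{cor:rootsCorrespondence} — whose bijection is induced by reduction modulo $p_{j}$, so that distinct $n$\=/th roots modulo ${p_{j}}^{e_j}$ project to distinct $n$\=/th roots modulo $p_{j}$ — we get $\alpha_{i}\not\equiv\alpha_{j}\pmod{p_{j}}$ for $i\neq j$, i.e. $\gcd(\alpha_{i}-\alpha_{j},p_{j})=1$. (Equivalently, one may invoke directly the earlier proposition that \Cref{cond:evalAtRoots} implies \Cref{cond:invertibleDifferences} over a power of an odd prime, applied to ${p_{j}}^{e_j}$.) Since this holds for every prime $p_{j}\mid m$, we conclude $\gcd(\alpha_{i}-\alpha_{j},m)=1$, so $\alpha_{i}-\alpha_{j}$ is a unit in $\Z_{m}$; alternatively, each $\alpha_{i}-\alpha_{j}$ is a unit modulo every ${p_{j}}^{e_j}$ and one glues the inverses together with the CRT.

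The only place where real content enters is \Cref{cor:rootsCorrespondence} — ultimately Hensel's lemma together with the oddness of $m$, which makes $f'(x)=nx^{n-1}$ a unit at every root: this is exactly what upgrades "distinct modulo ${p_{j}}^{e_j}$" to "distinct modulo $p_{j}$", and hence to coprimality of the difference with $p_{j}$. Everything else is bookkeeping with divisibility and the CRT, so I do not anticipate a genuine obstacle; the only care needed is to observe that "$n$\=/th roots of $a$ that are distinct modulo ${p_{j}}^{e_j}$" is precisely the hypothesis of the prime-power proposition.
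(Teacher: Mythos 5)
Your proof is correct and follows essentially the same route as the paper: reduce to the prime-power components via the CRT characterization of units in $\Z_{m}$, and use \Cref{cor:rootsCorrespondence} (Hensel) to upgrade distinctness modulo ${p_{j}}^{e_j}$ to distinctness modulo $p_{j}$, hence to coprimality of the differences with each $p_{j}$. The only difference is that you spell out both directions explicitly where the paper compresses the argument into one sentence.
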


\begin{proof}
  Follows the same ideas as the previous propositions, since an element is invertible if and only if it is invertible modulo all the coprime factors of a factorization of its modulus and we have seen (\Cref{cor:rootsCorrespondence}) that such roots are different modulo ${p_{j}}^{e_j}$ if and only if they are different modulo $p_{j}$, and therefore have invertible differences.
\end{proof}

The same way we can see how other implications are not true in general either.

\begin{proposition}[\Cref{cond:invertibleDifferences,cond:evalAtRoots} do not imply \Cref{cond:twofold} in $\Z_{m}$ if $m$ is not a power of a prime]

  Not every $n$\=/set of $n$\=/th roots of $a \in \Z_{m}$ with invertible differences is a twofold set of $n$\=/th roots. It is not the case in general, not even for roots of unity.
\end{proposition}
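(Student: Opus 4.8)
The plan is to settle this by an explicit counterexample, which is all the statement asks for. The construction glues two genuine twofold sets together through the Chinese Remainder Theorem in such a way that distinctness in each factor — hence invertibility of all differences, by the argument already used in the preceding theorems — is preserved, while the squaring pattern demanded by \Cref{cond:twofold} is destroyed. In outline, I would take $m=pq$ with $p\neq q$ odd primes both congruent to $1$ modulo $n$, so that $\Z_p$ and $\Z_q$ each contain $n$ distinct $n$\=/th roots of unity; order each family as a twofold set (for instance, the powers of a primitive $n$\=/th root of unity in natural order, which one checks is twofold since $i\equiv j\pmod{2^{\log n-k}}$ is exactly $2^k i\equiv 2^k j\pmod n$); and then define $\alpha_0,\dots,\alpha_{n-1}\in\Z_m$ by CRT, matching the $i$\=/th element of the first family with the $\sigma(i)$\=/th element of the second for a permutation $\sigma$ that is incompatible with the binary\=/tree indexing behind \Cref{cond:twofold}.

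For the write\=/up it is cleanest to give the smallest instance, $n=4$ and $m=65=5\cdot 13$ (which is not a prime power). The fourth roots of unity are $\{1,2,3,4\}$ in $\Z_5$ and $\{1,5,8,12\}$ in $\Z_{13}$, and I take $\alpha_0,\alpha_1,\alpha_2,\alpha_3\in\Z_{65}$ to be the CRT lifts of the residue pairs $(1,1),(2,12),(3,5),(4,8)$, i.e.\ $\alpha_0=1$, $\alpha_1=12$, $\alpha_2=18$, $\alpha_3=34$. These are fourth roots of $1$ in $\Z_{65}$, and they are pairwise distinct both modulo $5$ and modulo $13$, so every difference $\alpha_i-\alpha_j$ is invertible modulo $65$; thus the set satisfies \Cref{cond:invertibleDifferences,cond:evalAtRoots}. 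The point I would then verify is that no two of the $\alpha_i$ share a square modulo $65$: squaring the residue pairs gives $(1,1),(4,1),(4,12),(1,12)$, so two $\alpha_i$ agree in their square modulo $5$ exactly for the pairings $\{\alpha_0,\alpha_3\},\{\alpha_1,\alpha_2\}$ and modulo $13$ exactly for $\{\alpha_0,\alpha_1\},\{\alpha_2,\alpha_3\}$; these two pairings share no pair, hence no two distinct $\alpha_i$ coincide in their square modulo $65$.

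To conclude, unwinding \Cref{cond:twofold} for $n=4$ (the only non\=/trivial instance being $k=1$) shows that a family reindexed as a twofold set $\beta_0,\beta_1,\beta_2,\beta_3$ of fourth roots must satisfy $\beta_0^{2}=\beta_2^{2}$ and $\beta_1^{2}=\beta_3^{2}$; equivalently, the four elements must split into two disjoint pairs, each with a common square. Since our set admits no such pairing, it cannot be reindexed into a twofold set, although it does satisfy \Cref{cond:invertibleDifferences,cond:evalAtRoots}; and because $a=1$, this already happens for roots of unity, as claimed.

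The only delicate point — and what makes a clean argument for \emph{general} $n$ more laborious than the explicit $n=4$ case — is keeping track of which reindexings send a twofold set to a twofold set: for a twofold set whose iterated squares stay genuinely distinct at every level, these are exactly the automorphisms of the associated height\=/$\log n$ binary tree, a group of order $2^{\,n-1}$, and one must pick the gluing permutation $\sigma$ outside this group, so that no single reindexing $\tau$ can make both CRT components twofold at once (if $\tau$ and $\sigma\tau$ were both tree automorphisms, so would be $\sigma$). Guaranteeing the genericity that makes this characterization valid is the reason for requiring $p,q\equiv 1\pmod n$; the explicit computation above sidesteps this bookkeeping entirely.
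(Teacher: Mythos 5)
Your proposal is correct and follows essentially the same route as the paper: an explicit set of four $4$\=/th roots of unity in $\Z_{65}$ whose differences are all invertible but whose squares are pairwise distinct, so that no reindexing can satisfy \Cref{cond:twofold} (the paper uses $\{12,14,18,21\}$, you use $\{1,12,18,34\}$). Your added explanation of the CRT gluing with mismatched orderings is exactly the mechanism the paper itself points out in the remark immediately following the proposition, so nothing essential differs.
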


\begin{proof}
  Consider the following set of $4$\=/th roots of unity in $\Z_{65}$:
\begin{align*}
  12^4 &\equiv 1 &\pmod{65} &&\qquad\qquad\qquad\qquad 12^2 &\equiv 14 &\pmod{65}\\
  14^4 &\equiv 1 &\pmod{65} && 14^2 &\equiv 1  &\pmod{65}\\
  18^4 &\equiv 1 &\pmod{65} && 18^2 &\equiv 64  &\pmod{65}\\
  21^4 &\equiv 1 &\pmod{65} && 21^2 &\equiv 51  &\pmod{65}
\end{align*}

And their set of differences is $\{2,3,4,6,7,9\}$, all of them invertible elements modulo $65$.
\end{proof}

Observe that the evaluation points from the example are indeed, after some reorderings, a twofold set of $n$\=/th roots of unity in $\Z_{5}$ and in $\Z_{13}$, but the \emph{reorderings} are different.
\begin{align*}
  12^2 &\equiv 18^2 &\pmod{5} &&\qquad\qquad\qquad\qquad 12^2 &\equiv 14^2 &\pmod{13} \\
  14^2 &\equiv 21^2 &\pmod{5} && 18^2 &\equiv 21^2 &\pmod{13}
\end{align*}

Once again the proposition does hold if we work modulo a power of an odd prime. To prove it we require a couple of lemmas that show the important role of roots of unity and allow us to focus on them with the goal of better understanding these orderings.

\begin{lemma}[\Cref{cond:evalAtRoots,cond:invertibleDifferences} imply the evaluation points are invertible]\label{lemma:invertibilityEvaluationPoints}

  Let $\alpha_{0}, \dots, \alpha_{n-1}$ be different $n$\=/th roots of $a$ in $\Z_{m}$ such that $\alpha_{i} - \alpha_{j}$ is invertible in $\Z_{m}$ for all $i \neq j$. Then $\alpha_{i}$ is invertible in $\Z_{m}$ for all $i$.

\end{lemma}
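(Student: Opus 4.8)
The plan is to show directly that $\gcd(\alpha_{i},m)=1$ for every $i$, arguing one prime factor of $m$ at a time. Recall that an element of $\Z_{m}$ is invertible if and only if it is coprime to $m$, equivalently if and only if no prime dividing $m$ divides it; the same characterization applies to the differences $\alpha_{i}-\alpha_{j}$, so \Cref{cond:invertibleDifferences} says exactly that no prime factor of $m$ divides any difference $\alpha_{i}-\alpha_{j}$ with $i\neq j$.

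Suppose, for contradiction, that some $\alpha_{i}$ fails to be invertible. Then there is a prime $p\mid m$ with $p\mid\alpha_{i}$. Since the points satisfy \Cref{cond:evalAtRoots} we have $a=\alpha_{i}^{n}\equiv 0\pmod{p}$, and hence for \emph{every} index $j$ we get $\alpha_{j}^{n}\equiv a\equiv 0\pmod{p}$, which forces $p\mid\alpha_{j}$ because $p$ is prime. As $n\geq 2$ there is some index $j\neq i$, and for that $j$ we obtain $p\mid(\alpha_{i}-\alpha_{j})$, so $\alpha_{i}-\alpha_{j}$ is not invertible in $\Z_{m}$, contradicting \Cref{cond:invertibleDifferences}. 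Therefore every $\alpha_{i}$ is a unit.

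There is essentially no obstacle in carrying this out; the only point demanding a little care is the degenerate case $n=1$, where there are no differences to exploit and the single ``$n$\=/th root'' is just $a$, which need not be a unit --- so the statement is implicitly understood with $n\geq 2$, the only regime in which the FFT recursion is ever applied. It is also worth observing that the argument does not presuppose $a$ to be invertible: this follows for free, since $a=\alpha_{0}^{n}$ is a power of a unit, which fits the paper's aim of extracting the genuinely necessary hypotheses.
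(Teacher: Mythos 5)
Your proof is correct and follows essentially the same route as the paper's: the paper takes a square-free common divisor $d$ of $\alpha_{i}$ and $m$, uses $m\mid\alpha_{i}^{n}-\alpha_{j}^{n}$ to conclude $d\mid\alpha_{j}$, and then contradicts the invertibility of $\alpha_{i}-\alpha_{j}$, which is exactly your argument specialized to a single prime $p\mid m$. Your version is marginally cleaner (no need for the square-free caveat), and your explicit remarks about $n\geq 2$ and the invertibility of $a$ match observations the paper makes implicitly and in the remark immediately following the lemma.
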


\begin{proof}
  Choose two indices $i$ and $j$ and let $d$ be a square-free common divisor of $\alpha_{i}$ and $m$.
  Since ${\alpha_{i}}^n \equiv a \pmod{m}$ and ${\alpha_{j}}^n \equiv a \pmod{m}$ we have that $m \vert {\alpha_{i}}^n - {\alpha_{j}}^n$, implying that $d \vert \alpha_{j}$ (here we have to use that $d$ is square free).
  We would finally get $d \vert \alpha_{i} - \alpha_{j}$ but we know $\gcd(\alpha_{i} - \alpha_{j},m) = 1$ and therefore $d=1$, proving $\gcd(\alpha_{i},m) = 1$ for all $i$.
\end{proof}

\begin{remark}
  If roots of $a$ are invertible it directly follows that $a$ itself has to be invertible. We will impose it when required since this argument implies it is a necessary condition for the existence of the inverse transform $T^{-1}$.
\end{remark}

\begin{lemma}[Roots of $a$ and roots of $1$]\label{lemma:alphasAndOmegasOne}

  Let $a \in \Z_{m}$. The following two statements are equivalent:
  \begin{enumerate}[(i)]
    \item\label{enum:rootA} The set $\alpha_{0}, \dots, \alpha_{n-1} \in \Z_{m}$ satisfies \Cref{cond:evalAtRoots,cond:invertibleDifferences}.
    \item\label{enum:rootOne} The set $\alpha_{0}, \dots, \alpha_{n-1} \in \Z_{m}$ can be constructed from an invertible $n$\=/th root of $a$, lets denote it $\alpha$, and $n$ different $n$\=/th roots of unity $\omega_{0}, \dots, \omega_{n-1}$ with invertible differences in $\Z_{m}$ such that $\alpha_{i} = \alpha\omega_{i}$.
  \end{enumerate}
\end{lemma}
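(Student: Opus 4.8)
The plan is to prove the two implications separately; in both directions the mechanism is identical, namely that multiplication of a finite set of points of $\Z_{m}$ by a fixed unit is a bijection that transports the root condition between $x^{n}-a$ and $x^{n}-1$ and preserves \Cref{cond:invertibleDifferences}.

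First I would handle the implication \ref{enum:rootOne}$\,\Rightarrow\,$\ref{enum:rootA}. Assuming $\alpha_{i} = \alpha\omega_{i}$ with $\alpha$ an invertible $n$\=/th root of $a$ and $\omega_{0},\dots,\omega_{n-1}$ distinct $n$\=/th roots of unity with invertible differences, I would compute ${\alpha_{i}}^{n} = \alpha^{n}{\omega_{i}}^{n} = a\cdot 1 = a$, so every $\alpha_{i}$ is an $n$\=/th root of $a$ and \Cref{cond:evalAtRoots} holds; and I would factor $\alpha_{i}-\alpha_{j} = \alpha(\omega_{i}-\omega_{j})$ as a product of two units (the first by hypothesis, the second because the $\omega$'s have invertible differences), which gives \Cref{cond:invertibleDifferences}, with pairwise distinctness a byproduct since units are nonzero in $\Z_{m}$.

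For the converse \ref{enum:rootA}$\,\Rightarrow\,$\ref{enum:rootOne}, I would begin by invoking \Cref{lemma:invertibilityEvaluationPoints} to conclude that every $\alpha_{i}$ is a unit (in particular $a = {\alpha_{0}}^{n}$ is a unit). Then I would set $\alpha \coloneqq \alpha_{0}$ and define $\omega_{i} \coloneqq \alpha^{-1}\alpha_{i}$ for every $i$. The identity ${\omega_{i}}^{n} = \alpha^{-n}{\alpha_{i}}^{n} = \alpha^{-n}a = \alpha^{-n}\alpha^{n} = 1$ shows each $\omega_{i}$ is an $n$\=/th root of unity, while $\omega_{i}-\omega_{j} = \alpha^{-1}(\alpha_{i}-\alpha_{j})$ is once more a product of units, so the $\omega_{i}$ are distinct with invertible differences; since $\alpha_{i} = \alpha\omega_{i}$ by construction, this is precisely the decomposition required by \ref{enum:rootOne}.

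I do not expect a genuine obstacle here: the only substantive ingredient is \Cref{lemma:invertibilityEvaluationPoints}, which is what licenses the change of variables $\omega_{i} = \alpha^{-1}\alpha_{i}$ (without it one could not even make sense of $\alpha^{-1}$), and everything else reduces to the routine observation that scaling by a fixed unit respects both the binomial-root condition and the invertible-difference condition.
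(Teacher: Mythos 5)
Your proposal is correct and follows essentially the same route as the paper: both directions proceed by the change of variables $\omega_{i}=\alpha_{0}^{-1}\alpha_{i}$ (respectively $\alpha_{i}=\alpha\omega_{i}$), with \Cref{lemma:invertibilityEvaluationPoints} supplying the invertibility of $\alpha_{0}$ and the differences transported by the factorization $\alpha_{i}-\alpha_{j}=\alpha(\omega_{i}-\omega_{j})$. Your write-up is in fact marginally more explicit than the paper's in verifying ${\omega_{i}}^{n}=1$, but there is no substantive difference.
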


\begin{proof}
  Lets prove both implications:
  \begin{itemize}
    \item(\ref{enum:rootA})$\implies$(\ref{enum:rootOne})

    Let $\alpha_{0},\dots,\alpha_{n-1}$ be the roots satisfying the conditions.

    We can define $\alpha=\alpha_{0}$ (invertible by \Cref{lemma:invertibilityEvaluationPoints}) and $\omega_{i} = \alpha_{i} \cdot {\alpha_{0}}^{-1}$. We can check $\omega_{i}$ are roots of unity and their differences are invertible
    \[{(\omega_{i} - \omega_{j})}^{-1} = {\left(\frac{\alpha_{i}}{\alpha_{0}} - \frac{\alpha_{j}}{\alpha_{0}}\right)}^{-1} = \alpha_{0} \cdot {(\alpha_{i}-\alpha_{j})}^{-1}.\]

    \item(\ref{enum:rootOne})$\implies$(\ref{enum:rootA})

    Let $\alpha, \omega_{0}, \dots, \omega_{n-1}$ be a set of roots satisfying the conditions.

    Define now $\alpha_{i} = \alpha\cdot\omega_{i}$. Again by construction all $\alpha_{i}$ are $n$\=/th roots of $a$ and their differences are invertible since
    \[{(\alpha_{i} - \alpha_{j})}^{-1} = {(\alpha\cdot\omega_{i} - \alpha\cdot\omega_{j})}^{-1} = \alpha^{-1}{(\omega_{i} - \omega_{j})}^{-1}.\]
  \end{itemize}
\end{proof}

This motivates the definition of a sufficient condition that, as we are going to see, will be necessary when $m$ is a power of an odd prime.

\begin{definition}[$(\alpha,\omega)$\=/set]
  Let $\alpha$ be any $n$\=/th root of an invertible $a \in R$ and $\omega$ an $n$\=/th root of unity in $R$ of order $n$ whose powers have invertible differences. Then the set defined as $\alpha_{i} = \alpha\omega^{i}$ is said to be an  \emph{$(\alpha,\omega)$\=/set}.
\end{definition}

\begin{condition}[$(\alpha,\omega)$\=/set]\label{cond:alphaOmega}
We say a set of evaluation points satisfies \Cref{cond:alphaOmega} if (after some reordering) it is an $(\alpha,\omega)$\=/set for some $\alpha,\omega \in R$.
\end{condition}

\begin{remark}
  Let $m={p_{1}}^{e_1} \dots {p_{k}}^{e_k}$ be the prime decomposition of the module $m$.
  An $n$\=/th root of unity $\omega \in \Z_{m}$ can be determined from $\omega_{(i)} \equiv \omega \pmod{p_{i}^{e_{i}}}$ and the order of $\omega$ is just the least common multiple of the orders of $\omega_{(i)}$ in $\Z_{p_{i}^{e_{i}}}$.
  Since we choose $n$ to be a power of $2$ and all the orders of each $\omega_{(i)}$ divide $n$ the least common multiple is just going to be the maximum of the orders. Then, for $\omega$ to be an $n$\=/th root of unity of order $n$ it is only necessary that one of these $\omega_{(j)}$ has order $n$. However as we also need to impose invertibility of the differences of its powers then every $\omega_{(i)}$ has to have order $n$.
\end{remark}

\begin{proposition}[\Cref{cond:alphaOmega} implies \Cref{cond:twofold,cond:evalAtRoots,cond:invertibleDifferences}]\label{prop:alphaOmegaSuficiente}

  An $(\alpha,\omega)$\=/set in $\Z_{m}$ is a twofold set of $n$\=/th roots of $a$ with invertible differences.
\end{proposition}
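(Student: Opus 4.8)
The plan is to verify directly, from the definition $\alpha_i = \alpha\omega^i$ with the \emph{natural} indexing (so that, unlike in the $\Z_{65}$ example above, no reordering is needed), each of the three defining properties \Cref{cond:evalAtRoots}, \Cref{cond:invertibleDifferences} and \Cref{cond:twofold} in turn.

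For \Cref{cond:evalAtRoots}: since $\omega^n = 1$ and $\alpha^n = a$, we get $\alpha_i^n = \alpha^n(\omega^n)^i = a$, so every $\alpha_i$ is an $n$-th root of $a$. For \Cref{cond:invertibleDifferences}: $a$ is invertible by hypothesis, hence so is $\alpha$ (from $\alpha\cdot\alpha^{n-1} = a$ we read off $\alpha^{-1} = a^{-1}\alpha^{n-1}$), and then $\alpha_i - \alpha_j = \alpha(\omega^i - \omega^j)$ is a product of two units — the second being a unit precisely by the assumption that the powers of $\omega$ have invertible differences — so it is itself a unit; in particular the $\alpha_i$ are pairwise distinct, as are the $n$ powers $\omega^0,\dots,\omega^{n-1}$ since $\omega$ has order $n$.

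The one genuinely computational step is \Cref{cond:twofold}. Fix $k$ with $0 \le k \le \log n$ and suppose $i \equiv j \pmod{2^{\log(n)-k}}$; writing $i - j = 2^{\log(n)-k}\,t$ with $t \in \Z$ gives $2^k(i-j) = 2^{\log n}\,t = nt$, hence $\omega^{2^k(i-j)} = (\omega^n)^t = 1$, i.e. $\omega^{2^k i} = \omega^{2^k j}$, and multiplying by $\alpha^{2^k}$ yields $\alpha_i^{2^k} = \alpha^{2^k}\omega^{2^k i} = \alpha^{2^k}\omega^{2^k j} = \alpha_j^{2^k}$, which is exactly the required implication. This single computation also covers the endpoints: for $k=0$ the congruence forces $i=j$, and for $k=\log n$ it reduces to $\alpha_i^n = \alpha_j^n = a$, already established.

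I do not anticipate a real obstacle here: this is the easy (``sufficiency'') direction and the argument is just bookkeeping with exponents. The only points that need care are that $\omega$ is assumed to have order \emph{exactly} $n$ — this is what legitimizes $\omega^{nt}=1$ and, together with the invertible-differences hypothesis, makes $\omega^0,\dots,\omega^{n-1}$ genuinely $n$ distinct $n$-th roots of unity — and that, in contrast with the converse statements proved later, nothing in this direction requires $m$ to be odd or a prime power.
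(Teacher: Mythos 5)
Your proposal is correct and follows essentially the same route as the paper's proof: each of \Cref{cond:evalAtRoots,cond:invertibleDifferences,cond:twofold} is verified directly from $\alpha_{i}=\alpha\omega^{i}$, and your exponent computation for the twofold property is the same one the paper uses. The only (minor, and arguably cleaner) deviation is that you derive the invertibility of $\alpha$ ring\-/generically from $\alpha^{-1}=a^{-1}\alpha^{n-1}$, whereas the paper argues via the prime decomposition of $m$; both are valid.
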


\begin{proof}
  Let $\alpha_{i} = \alpha\omega^i$ and let $m={p_{1}}^{e_1}{p_{2}}^{e_2} \dots {p_{k}}^{e_k}$ be the prime decomposition of $m$.

  We can start checking \Cref{cond:invertibleDifferences}. From the proof of \Cref{lemma:alphasAndOmegasOne} we know an $(\alpha,\omega)$\=/set of $n$\=/th roots satisfies \Cref{cond:invertibleDifferences} if $\alpha$ is invertible and $\omega^i - \omega^j$ are invertible too.

  Since $a$ is invertible in $\Z_{m}$ then $\alpha$ as an $n$\=/th root of $a$ has to be invertible too. Otherwise if $\alpha \equiv 0 \pmod{p_{j}}$ for some $j$ then $a \equiv 0 \pmod{p_{j}}$ for the same $j$ and it would not be invertible either, contradicting the statement.

  The second condition is ensured from the definition. Then every difference is invertible modulo $m$.

  \Cref{cond:evalAtRoots} follows from the construction of an $(\alpha,\omega)$\=/set.

  Then for \Cref{cond:twofold} let $i \equiv j \pmod{2^{{\log(n)} - k}}$ and, without loss of generality, assume $i \geq j$ and therefore $i = j + c\cdot2^{{\log(n)}-k}$ for some non-negative integer $c$. Then
  \[ \alpha_{i}^{2^{k}} =
  {(\alpha_{0}\omega^{i})}^{2^{k}} =
  \alpha_{0}^{2^{k}}\omega^{(j + c\cdot2^{{\log(n)}-k})\cdot2^{k}} =
  \alpha_{0}^{2^{k}}\omega^{j \cdot 2^{k} + c \cdot n} =
  {(\alpha_{0}\omega^{j})}^{2^{k}} =
  \alpha_{j}^{2^{k}}.  \]
\end{proof}

This condition is quite convenient, for example it allows us to explicitly describe the transform from its Vandermonde matrix, as we mentioned before, in a compact way as ${(V)}_{i\,j} = {(\alpha \omega^{i})}^{j}$ and ${(V^{-1})}_{i\,j} =
\frac{1}{n}{(\alpha^{-1}\omega^{-j})}^{i}$. Notice that the inverse matrix looks like $1/n$ times the transpose of the evaluation matrix of an $(\alpha^{-1},\omega^{-1})$\=/set. If we were working in $\Z_{m}[x]/\left<x^n-1\right>$ with $\alpha=1$ the transpose would be irrelevant as the matrix would be symmetric and the same efficient recursive evaluation techniques from the direct transform would work directly for its inverse.

However even if \Cref{cond:alphaOmega} implies \Cref{cond:twofold,cond:evalAtRoots,cond:invertibleDifferences} the converse is not true in general. Besides that we can see it holds for some particular cases, when $m$ is a power of a prime.

\begin{proposition}[\Cref{cond:invertibleDifferences,cond:evalAtRoots} do imply \Cref{cond:alphaOmega} in $\Z_{m}$ if $m$ is a power of an odd prime]\label{prop:alphaOmegaPotenciaDePrimo}

  Let $\alpha_{0},\dots,\alpha_{n-1}$ be a set of $n$\=/th roots of $a$ with invertible differences in $\Z_{p^e}$, with $p$ an odd prime. This is (except reordering) an $(\alpha,\omega)$\=/set of $n$\=/th roots of $a$ in $\Z_{p^e}$.
\end{proposition}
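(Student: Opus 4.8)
The plan is to reduce the statement, via \Cref{lemma:alphasAndOmegasOne}, to a claim purely about $n$\=/th roots of unity, and then to settle that claim using the cyclic structure of $\Z_{p}^{\ast}$ together with the Hensel correspondence of \Cref{cor:rootsCorrespondence}.

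First I would observe that, since the $\alpha_{i}$ are $n$\=/th roots of $a$ with invertible differences, \Cref{lemma:invertibilityEvaluationPoints} makes every $\alpha_{i}$ invertible, hence $a = {\alpha_{0}}^{n}$ is invertible in $\Z_{p^{e}}$. Then \Cref{lemma:alphasAndOmegasOne} applies: putting $\alpha = \alpha_{0}$ (an invertible $n$\=/th root of $a$) and $\omega_{i} = \alpha_{i}\,{\alpha_{0}}^{-1}$, we get $n$ \emph{different} $n$\=/th roots of unity $\omega_{0},\dots,\omega_{n-1}$ with invertible differences satisfying $\alpha_{i} = \alpha\omega_{i}$. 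So it suffices to show that these $\omega_{i}$ are, after reordering, exactly the powers $1,\omega,\dots,\omega^{n-1}$ of a single $n$\=/th root of unity $\omega$ of order $n$; the requirement that the powers of $\omega$ have invertible differences will then be automatic, since $\{\omega^{i}\}$ will coincide as a set with $\{\omega_{j}\}$.

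Next I would count $n$\=/th roots of unity. The differences $\omega_{i}-\omega_{j}$ being invertible means the $\omega_{i}$ stay pairwise distinct modulo $p$, so $\Z_{p}$ contains at least $n$ distinct $n$\=/th roots of unity. Since $\Z_{p}^{\ast}$ is cyclic of order $p-1$, the solutions of $x^{n}=1$ in $\Z_{p}$ form a cyclic subgroup of order $\gcd(n,p-1) \le n$; having $n$ of them forces $n \mid p-1$ and this subgroup to have order exactly $n$. Hence the images $\omega_{i}\brem p$ are precisely \emph{all} the $n$\=/th roots of unity in $\Z_{p}$, and some $\omega_{i_{0}}\brem p$ generates this cyclic group, i.e.\ has order $n$. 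By \Cref{cor:rootsCorrespondence} and the order preservation noted immediately after it, $\omega \coloneqq \omega_{i_{0}}$ then has order $n$ in $\Z_{p^{e}}$ as well; its $n$ distinct powers $1,\omega,\dots,\omega^{n-1}$ are $n$\=/th roots of unity in $\Z_{p^{e}}$, and since \Cref{cor:rootsCorrespondence} says there are exactly $n$ of those, we get $\{\,\omega^{i} : 0 \le i < n\,\} = \{\,\omega_{0},\dots,\omega_{n-1}\,\}$.

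Finally I would reorder: choose the permutation $\sigma$ with $\omega_{\sigma(i)} = \omega^{i}$ and relabel the point $\alpha_{\sigma(i)}$ as the $i$\=/th point, so the set becomes $\{\alpha\omega^{i}\}_{i=0}^{n-1} = \{\alpha_{i}\}_{i=0}^{n-1}$, an $(\alpha,\omega)$\=/set: $\alpha$ is an $n$\=/th root of the invertible $a$, $\omega$ has order $n$, and the powers of $\omega$ have invertible differences because they are a reordering of the $\omega_{j}$. The only genuinely delicate step is the counting one — upgrading "$n$ distinct $n$\=/th roots" to "the full, cyclic, order\=/$n$ group of $n$\=/th roots" in $\Z_{p}$ and transporting a generator intact up to $\Z_{p^{e}}$; everything else is bookkeeping enabled by \Cref{lemma:alphasAndOmegasOne} and \Cref{cor:rootsCorrespondence}.
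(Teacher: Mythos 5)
Your proposal is correct and follows essentially the same route as the paper: reduce to roots of unity via \Cref{lemma:alphasAndOmegasOne}, use the cyclic structure of $\Z_{p}^{\ast}$ and the bound on the number of solutions of $x^{n}=1$ to conclude that the $\omega_{i}\brem p$ form the full cyclic group of $n$\=/th roots of unity generated by one of them, and then lift the generator to $\Z_{p^{e}}$ via \Cref{cor:rootsCorrespondence} with its order preserved. Your version merely adds a couple of explicit bookkeeping steps (the $\gcd(n,p-1)$ count and the cardinality argument identifying $\{\omega^{i}\}$ with $\{\omega_{j}\}$) that the paper leaves implicit.
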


\begin{proof}
  From \Cref{lemma:alphasAndOmegasOne} we deduce the existence of an invertible $\alpha$ and $\omega_{0}, \dots, \allowbreak \omega_{n-1}$ with invertible differences such that $\alpha_{i} = \alpha\cdot\omega_{i}$.

  As $\Z_{p}$ is a field its multiplicative group is a cyclic group of order $p-1$. 
  Then the set of $n$\=/th roots of unity in $\Z_{p}$ is also a group, and as a subgroup of a cyclic group it is also cyclic.

  It is also known that $x^n-1$ has at most $n$ solutions modulo $p$ (Theorem~2.6 from~\cite{niven}), therefore the $n$ roots $\omega_{i} \prem{p}$ (all different since $\alpha\omega_{i} \prem{p}$ are lifted to different points in $\Z_{p^e}$) form the whole cyclic group of roots of unity and in consequence are generated by one of them.

  As there is a one to one correspondence among $n$\=/th roots in $\Z_{p}$ and in $\Z_{p^e}$ the original $\omega_{i} \in \Z_{p^e}$ are generated too by one $\omega$ of order $n$. That is, there exists a permutation $\pi$ such that $\omega_{\pi(i)} = \omega^i$.

  After this reordering defined by $\pi$ the set $\alpha_{\pi(0)}, \alpha_{\pi(1)}, \dots, \alpha_{\pi(n-1)}$ is an $(\alpha,\omega)$\=/set.
\end{proof}

\begin{proposition}[\Cref{cond:invertibleDifferences,cond:evalAtRoots} do imply \Cref{cond:twofold} in $\Z_{m}$ if $m$ is a power of an odd prime]\label{prop:twofoldPotenciaDePrimo}

  Let $\alpha_{0},\dots,\alpha_{n-1}$ be a set of $n$\=/th roots of $a$ with invertible differences in $\Z_{p^e}$, with $p$ an odd prime. This is (except reordering) a twofold\=/set of $n$\=/th roots of $a$ in $\Z_{p^e}$.
\end{proposition}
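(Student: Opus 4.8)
The plan is to obtain this statement almost for free by composing the two preceding results, \Cref{prop:alphaOmegaPotenciaDePrimo} and \Cref{prop:alphaOmegaSuficiente}. Both the algebraic work (producing a single root of unity that generates all the $\omega_i$) and the combinatorial verification of the twofold property for a set of the form $\alpha\omega^i$ have already been carried out; all that remains is to thread them together.

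Concretely, first I would invoke \Cref{prop:alphaOmegaPotenciaDePrimo}: since $\alpha_{0},\dots,\alpha_{n-1}$ are $n$\=/th roots of $a$ with invertible differences in $\Z_{p^e}$ with $p$ an odd prime, there is a permutation $\pi$ such that the reindexed tuple $\alpha_{\pi(0)},\dots,\alpha_{\pi(n-1)}$ is an $(\alpha,\omega)$\=/set, i.e.\ $\alpha_{\pi(i)} = \alpha\omega^{i}$ for an invertible $n$\=/th root $\alpha$ of $a$ and an $n$\=/th root of unity $\omega$ of order $n$ whose powers have invertible differences. Then I would apply \Cref{prop:alphaOmegaSuficiente}, which states that every $(\alpha,\omega)$\=/set in $\Z_{m}$ is a twofold set of $n$\=/th roots of $a$ with invertible differences; applying it to the reindexed tuple shows that $\alpha_{\pi(0)},\dots,\alpha_{\pi(n-1)}$ is a twofold set. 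Hence, after the reordering given by $\pi$, the original set is a twofold set of $n$\=/th roots of $a$, which is exactly the claim.

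The only point worth a sentence of care is that the reordering which exhibits the $(\alpha,\omega)$\=/structure is automatically the one that exhibits the twofold structure: the twofold property is a property of an \emph{indexed} set, and \Cref{prop:alphaOmegaSuficiente} establishes it for an $(\alpha,\omega)$\=/set with its natural indexing $i \mapsto \alpha\omega^{i}$, so no further permutation is needed beyond $\pi$. I do not expect any real obstacle here; the substantive content lives entirely in \Cref{prop:alphaOmegaPotenciaDePrimo} (where the field structure of $\Z_{p}$ and \Cref{cor:rootsCorrespondence} are used to lift a cyclic generator of the group of $n$\=/th roots of unity) and in the short exponent computation inside \Cref{prop:alphaOmegaSuficiente}.
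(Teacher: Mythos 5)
Your proposal is correct and is exactly the paper's own proof: the paper also derives this proposition by composing \Cref{prop:alphaOmegaPotenciaDePrimo} (\Cref{cond:invertibleDifferences,cond:evalAtRoots} imply \Cref{cond:alphaOmega} modulo a power of an odd prime) with \Cref{prop:alphaOmegaSuficiente} (\Cref{cond:alphaOmega} implies \Cref{cond:twofold}). Your extra remark about the reordering $\pi$ being the one that exhibits the twofold structure is a harmless clarification the paper leaves implicit.
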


\begin{proof}
  Direct as we know by \Cref{prop:alphaOmegaPotenciaDePrimo} that \Cref{cond:invertibleDifferences,cond:evalAtRoots} imply \Cref{cond:alphaOmega} in $\Z_{m}$ if $m$ is a power of an odd prime and \Cref{cond:alphaOmega} always implies \Cref{cond:twofold}, as seen in \Cref{prop:alphaOmegaSuficiente}.
\end{proof}

This new condition seems the right choice, and the FFT is usually introduced from constructions equivalent to this definition, but we should study first if restricting to this particular set of evaluation points reduces the options for computing an FFT multiplication when $m$ is not a power of an odd prime.

Once again \Cref{cond:invertibleDifferences,cond:evalAtRoots} being true modulo every ${p_{j}}^{e_j}$ should imply that \Cref{cond:alphaOmega} holds modulo every ${p_{j}}^{e_j}$, but the permutations might be different.

There are cases where these permutations allow \Cref{cond:twofold} to be true while \Cref{cond:alphaOmega} is not.

\begin{theorem}[\Cref{cond:invertibleDifferences,cond:evalAtRoots,cond:twofold} do not imply \Cref{cond:alphaOmega} in general, but imply the existence of a set satisfying \Cref{cond:alphaOmega}]\label{thm:relationConditionsAlphaOmega}

  Let $\alpha_{0},\dots,\alpha_{n-1}$ be a twofold set of $n$\=/th roots of $a$ with invertible differences in $\Z_{m}$, where $m$ is odd and not a power of a prime and $n$ is a power of two greater than $2^2$.
  Then $a$ has an inverse in $\Z_{m}$, there is both a twofold set $\alpha'_{0},\dots,\alpha'_{n-1}$ of $n$\=/th roots of $a$ in $\Z_{m}$ with invertible differences not satisfying \Cref{cond:alphaOmega} and a set $\alpha''_{i} = \alpha\omega^i$ where $\alpha$ is an $n$\=/th root of $a$ and $\omega$ is a root of unity of order $n$ with all its powers having invertible differences, that is, an $(\alpha,\omega)$\=/set.
\end{theorem}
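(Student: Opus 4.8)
The plan is to reduce everything to the prime-power case and then reassemble with the CRT. First I would write $m = {p_{1}}^{e_1}\cdots{p_{k}}^{e_k}$ with $k \geq 2$ (as $m$ is not a prime power) and reduce the given twofold set $\alpha_{0},\dots,\alpha_{n-1}$ modulo each ${p_{j}}^{e_j}$; since an element is invertible modulo $m$ exactly when it is invertible modulo every ${p_{j}}^{e_j}$, these reductions give $n$ distinct $n$\=/th roots of $a$ with invertible differences in each $\Z_{{p_{j}}^{e_j}}$, so by \Cref{prop:alphaOmegaPotenciaDePrimo} each is, up to reordering, an $(\alpha_{(j)},\omega_{(j)})$\=/set with $\omega_{(j)}$ of order $n$ and invertible\=/difference powers. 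Invertibility of $a$ then drops out for free, since $a \equiv \alpha_{(j)}^{n}$ and $\alpha_{(j)}$ is invertible (\Cref{lemma:invertibilityEvaluationPoints}). For the $(\alpha,\omega)$\=/set $\alpha''$ I would simply CRT\=/glue the $\alpha_{(j)}$ and $\omega_{(j)}$: choose $\alpha,\omega\in\Z_m$ reducing to $\alpha_{(j)},\omega_{(j)}$ modulo every ${p_{j}}^{e_j}$; then $\alpha^{n}\equiv a$, the order of $\omega$ is the least common multiple of the orders of the $\omega_{(j)}$, i.e.\ $n$, and each $\omega^{s}-\omega^{t}$ with $0\le s<t<n$ is invertible modulo each ${p_{j}}^{e_j}$, hence modulo $m$, so $\alpha''_{i}=\alpha\omega^{i}$ works.

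The substantive part is producing a twofold set $\alpha'_{0},\dots,\alpha'_{n-1}$ with invertible differences that is not an $(\alpha,\omega)$\=/set under any reordering, and this is where $n>2^{2}$ enters. I would work with the group $G$ of permutations $\rho$ of $\{0,\dots,n-1\}$ compatible with reduction modulo every power of two dividing $n$, i.e.\ $i\equiv i'\pmod{2^{s}}$ implies $\rho(i)\equiv\rho(i')\pmod{2^{s}}$ for all $0\le s\le\log n$; this is the automorphism group of the binary tree of height $\log n$ and has order $2^{n-1}$, whereas the affine permutations $i\mapsto ci+d\bmod n$ with $c$ odd number only $n^{2}/2$. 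Since $2^{n-1}>n^{2}/2$ exactly when $n\ge 8$ (with equality at $n=2,4$, which is precisely why those cases are excluded), I can fix $\rho\in G$ that is \emph{not} affine modulo $n$. Then I would define $\alpha'_{i}$ by the CRT: $\alpha'_{i}\equiv\alpha_{(1)}\omega_{(1)}^{\rho(i)}\pmod{{p_{1}}^{e_1}}$ and $\alpha'_{i}\equiv\alpha_{(j)}\omega_{(j)}^{i}\pmod{{p_{j}}^{e_j}}$ for $j\ge 2$. Modulo each prime power $\alpha'_{i}$ has the shape $\alpha_{(j)}\omega_{(j)}^{\rho_{j}(i)}$ with $\rho_{j}\in G$ ($\rho_{1}=\rho$, $\rho_{j}$ the identity for $j\ge 2$), so the $\alpha'_{i}$ are $n$\=/th roots of $a$, have invertible differences ($\rho_{j}$ is a bijection and the powers of each $\omega_{(j)}$ have invertible differences), and — translating \Cref{cond:twofold} into the divisibilities $2^{\log n-k}\mid\rho_{j}(i)-\rho_{j}(i')$, which hold because $\rho_{j}\in G$ — form a twofold set in the natural indexing.

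Finally I would show $\alpha'$ fails \Cref{cond:alphaOmega} by a rigidity argument. Assuming $\alpha'_{\pi(i)}=\beta\zeta^{i}$ for some permutation $\pi$ and admissible $\beta,\zeta$, I would reduce modulo ${p_{2}}^{e_2}$: there $\zeta$ still has order $n$ (its powers have invertible differences, hence stay distinct), so it generates the cyclic group of $n$\=/th roots of unity of $\Z_{{p_{2}}^{e_2}}$ — which is also generated by $\omega_{(2)}$, by the field argument in the proof of \Cref{prop:alphaOmegaPotenciaDePrimo} together with \Cref{cor:rootsCorrespondence} — so $\zeta\equiv\omega_{(2)}^{c}$ with $\gcd(c,n)=1$ and $\beta\equiv\alpha_{(2)}\omega_{(2)}^{d}$, and comparing exponents in $\alpha_{(2)}\omega_{(2)}^{\pi(i)}\equiv\alpha_{(2)}\omega_{(2)}^{d+ci}$ forces $\pi(i)\equiv ci+d\pmod n$, i.e.\ $\pi$ is affine. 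The same computation modulo ${p_{1}}^{e_1}$ (where $\alpha'_{l}\equiv\alpha_{(1)}\omega_{(1)}^{\rho(l)}$) shows $\rho\circ\pi$ is affine, and since the affine permutations form a group and $\pi$ is an invertible affine map, $\rho=(\rho\circ\pi)\circ\pi^{-1}$ would be affine — contradicting the choice of $\rho$. The main obstacle I anticipate is exactly this combination: extracting a non\=/affine element of $G$ (where the $n>4$ threshold is unavoidable) and proving the rigidity, both of which rest on the fact that the $n$\=/th roots of unity modulo a prime power form a cyclic group of order exactly $n$, so an $(\alpha,\omega)$\=/parametrisation is pinned down up to an affine reindexing, while the twofold condition leaves the much larger freedom of the tree group $G$, and these two groups already differ once $n\geq 8$.
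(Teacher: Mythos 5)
Your proposal is correct, and for the heart of the theorem it takes a genuinely different route from the paper. The first half (deducing invertibility of $a$ from \Cref{lemma:invertibilityEvaluationPoints}, reducing modulo each ${p_{j}}^{e_j}$, invoking \Cref{prop:alphaOmegaPotenciaDePrimo} per prime power, and CRT\=/gluing the $\alpha_{(j)},\omega_{(j)}$ into an $(\alpha,\omega)$\=/set) is essentially what the paper does. For the existence of a twofold set violating \Cref{cond:alphaOmega}, however, the paper argues purely by counting: after fixing the ordering modulo ${p_{1}}^{e_1}$ it exhibits $2^{(k-1)(n-1)}$ twofold sets with invertible differences (one per choice of tree automorphism in the remaining components) versus only $n^{2(k-1)}/2^{(k-1)}$ distinct $(\alpha,\omega)$\=/sets, and concludes by pigeonhole once $n>2^{2}$. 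You instead make the witness explicit: you isolate the group-theoretic content of that count (tree automorphisms of order $2^{n-1}$ versus affine reindexings $i\mapsto ci+d$ of order $n^{2}/2$, with the same threshold $n\geq 8$), apply a non-affine tree automorphism $\rho$ to a single CRT component, and then prove a rigidity lemma — any $(\alpha,\omega)$\=/parametrisation of the resulting set, read modulo each prime power, pins the reordering down to an affine map, forcing $\rho$ itself to be affine. Your reduction of $\zeta$ modulo ${p_{2}}^{e_2}$ is sound because the $n$\=/th roots of unity there form a cyclic group of order exactly $n$ (at most $n$ roots of $x^{n}-1$ over $\Z_{p_{2}}$ together with \Cref{cor:rootsCorrespondence}), so $\zeta=\omega_{(2)}^{c}$ with $c$ odd and the exponent comparison is legitimate. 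The trade-off: the paper's counting is shorter and avoids the rigidity argument entirely, while your construction costs an extra lemma but produces a concrete counterexample rather than a mere existence statement, and makes transparent exactly which two permutation groups are being compared and why $n=2,4$ are the boundary cases where they coincide.
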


\begin{proof}
  Following \Cref{lemma:invertibilityEvaluationPoints} we get $a$ is invertible.

  Let $m={p_{1}}^{e_1}{p_{2}}^{e_2} \dots {p_{k}}^{e_k}$ be the prime decomposition of $m$.

  There is a unique (except reordering) twofold set of $n$\=/th roots of $a$ with invertible differences in $\Z_{{p_{j}}^{e_j}}$. This is the case since the values of each of the roots of $a$ modulo ${p_{j}}^{e_j}$ are completely determined, being the unique elements lifted to $\Z_{{p_{j}}^{e_j}}$ from the unique $n$ roots of $a$ in $\Z_{p_{j}}$. By \Cref{prop:alphaOmegaPotenciaDePrimo} it satisfies \Cref{cond:alphaOmega}. The only thing we could choose is the respective order they have.

  This order is irrelevant in $\Z_{{p_{j}}^{e_{j}}}$ but becomes important when considering $\Z_{m}$ as once we fix an order modulo ${p_{1}}^{e_1}$ the different respective orders for the remaining ${p_{j}}^{e_{j}}$ would produce different elements in $\Z_{m}$.

  From the twofold set definition we got a tree structure in \Cref{fig:binarytree} and a specific notation for the points (a bit decomposition of the index). The twofold structure is only preserved by the tree structure, so the only possible reorderings are those that come from swapping left and right children of a node. That is, choosing $b_{i-1}\dots b_{0}$ (an internal node), and mapping $\alpha_{b_{{\log(n)}-1} \dots b_{i} \dots b_{0}}$ to $\alpha_{b_{{\log(n)}-1}\dots\overline{b_{i}} \dots b_{0}}$ for all $b_{{\log(n)}-1},\dots,b_{i+1}$ still preserves this structure.

  For example, choosing nodes $\alpha_{1}$ and $\alpha_{10}$ we obtain a different ordering like in \Cref{fig:binarytreeSwapp}.

  \begin{figure}[ht]
   \begin{minipage}{\textwidth}
     \centering
      \begin{tikzpicture}[every node/.style={minimum size=0.85cm,circle,draw,line width=1pt},edge from parent/.style={draw,line width=1pt,<-},
        level/.style={sibling distance=60mm/#1}, level distance=4.25em,
        edge from parent path={
        (\tikzparentnode) |-   
        ($(\tikzparentnode)!0.5!(\tikzchildnode)$) -| 
        (\tikzchildnode)},
        scale=1,transform shape
        ]
        \footnotesize
        \node [color=black] (a){$a$}
          child {node [color=black] (a0) {$\alpha_{\,0}$}
            child {node [color=black] (a00) {$\alpha_{\,00}$}
              child {node [color=black] (a000) {$\alpha_{\,000}$} edge from parent}
              child {node [color=black] (a100) {$\alpha_{\,100}$}  edge from parent}
            }
            child {node [dashed,text=black] (a10) {$\alpha_{\,10}$}
              child {node [dotted,text=black] (a110) {$\alpha_{\,110}$} edge from parent [color=black!50]}
              child {node [dotted,text=black] (a010) {$\alpha_{\,010}$} edge from parent [color=black!50]}
            }
          }
          child {node [dashed,text=black] (a1) {$\alpha_{\,1}$}
            child {node [dotted,text=black] (a11) {$\alpha_{\,11}$}  edge from parent [color=black!50]
              child {node [dotted,text=black] (a011) {$\alpha_{\,011}$}  edge from parent}
              child {node [dotted,text=black] (a111) {$\alpha_{\,111}$}  edge from parent}
            }
            child {node [dotted,text=black] (a01) {$\alpha_{\,01}$}  edge from parent [color=black!50]
              child {node [dotted,text=black] (a001) {$\alpha_{\,001}$}  edge from parent}
              child {node [dotted,text=black] (a101) {$\alpha_{\,101}$}  edge from parent}
            }
        };
      \end{tikzpicture}
      \caption{Twofold set of $8$\=/th roots swapping left and right descendants of $\alpha_{1}$ and $\alpha_{10}$}\label{fig:binarytreeSwapp}
    \end{minipage}
  \end{figure}

  This means that given a twofold set there are exactly $2^{n-1}$ possible reorderings (since the tree has exactly $n-1$ inner nodes and we can swap or not each of them).

  When odd $m$ is not a power of a prime once we fix the order of the roots modulo ${p_{1}}^{e_1}$ each possible reordering of the roots modulo ${p_{j}}^{e_{j}}$ for the remaining $j$ produces a new twofold set with invertible differences, for a total of $2^{(k-1)(n-1)}$ possibilities.

  On the other hand we can count the number of $(\alpha,\omega)$\=/sets. To do so it is important to notice that $(\alpha,\omega)$ and a different pair $(\beta,\xi)$ can generate the same set (just in a different order). If that is the case let $\pi \in \mathfrak{S}_{n}$ be the permutation such that $\beta_{i} = \alpha_{\pi(i)}$.
  Let $c=\pi(0)$, then $\beta = \beta_{0} = \alpha_{\pi(0)} = \alpha\omega^{c}$.
  Let $d = \pi(1) - \pi(0)$, then $\xi = \beta_{1}/\beta_{0} = \alpha_{\pi(1)}/\alpha_{\pi(0)} = (\alpha\omega^{\pi(1)})/(\alpha\omega^{\pi(0)}) = \omega^{\pi(1)-\pi(0)} = \omega^{d}$. 

  This allows us to completely characterize permutation $\pi$ as
  \[\beta_{i} = \beta\xi^{i} = \alpha\omega^{c}{(\omega^{d})}^{i} = \alpha\omega^{c + di} = \alpha_{c+di}.\]
  For $\pi$ to be a permutation $d$ has to be invertible modulo $n$ so it has to be odd.

  That is everything required as any pair of $c \in \Z_{n}$ and odd $d \in \Z_{n}$ would produce a new set of generators $(\alpha\omega^{c},\omega^{d})$ for the same set. That is, each $(\alpha,\omega)$\=/set with invertible differences can be constructed from $n^2/2$ different pairs of roots of $a$ and $1$ (we have $n$ options for $c$ and $n/2$ options for $d$).

  Then, from the existence of a twofold set with invertible differences and previous propositions we know there are $n$ possible $\alpha_{(j)}$ roots of $a$ in $\Z_{{p_{j}}^{e_{j}}}$ and $n/2$ possible $\omega_{(j)}$ roots of unity in $\Z_{{p_{j}}^{e_{j}}}$ of order $n$ that when combined via the CRT would define an $(\alpha'',\omega'')$\=/set with invertible differences. That is a total of $n^{2k}/2^{k}$ pairs of generators, and since every set is defined by $n^2/2$ pairs we would have $n^{2(k-1)}/2^{(k-1)}$ unique sets.

  However, if $n>2^2$ then $2^{(k-1)(n-1)}>n^{2(k-1)}/2^{(k-1)}$ and therefore some of the $2^{(k-1)(n-1)}$ sets ${\{\alpha'_{i}\}}_{i}$ satisfying \Cref{cond:invertibleDifferences,cond:evalAtRoots,cond:twofold} would not satisfy \Cref{cond:alphaOmega}.

\end{proof}

This important theorem ensures that, even if \Cref{cond:alphaOmega} is not necessary to design an efficient FFT multiplication algorithm, as it is sometimes indirectly taken for granted in the literature, we can safely assume it when working in $\Z_{m}[x]$ as it adds no additional restrictions on $m$ to the necessary \Cref{cond:invertibleDifferences,cond:evalAtRoots,cond:twofold}.
Only having done this analysis we can safely use $(\alpha,\omega)$\=/sets when convenient without loosing any generality.

\subsection{Existence and construction of suitable roots in \texorpdfstring{$\Z_{m}$}{ℤₘ}}\label{sec:existenceAndConstruction}

Once we have established the necessary conditions for the transform to be useful to efficiently compute the product of two polynomials modulo $x^{n} - a$ we are left with the task of studying whether such points exist in our desired ring and how to find them.

As we have seen in \Cref{thm:relationConditionsAlphaOmega} the existence of suitable points satisfying \Cref{cond:evalAtRoots,cond:invertibleDifferences,cond:twofold} implies the existence of an $(\alpha,\omega)$\=/set satisfying \Cref{cond:alphaOmega}.
For convenience we are going to characterize when such set of points exists and how to find it.

\begin{remark}
  When $R=\mathbb{C}$ we just have to choose $\omega = e^{\frac{i2\pi}{n}}$ and $\alpha = \sqrt[n]{a}$ (for example $\alpha=1$ if $a=1$ or $\alpha = e^{\frac{i\pi}{n}}$ if $a=-1$). This choice is the standard Fourier Transform and it is usually introduced directly in the literature.
\end{remark}

Now we can consider the case $R=\Z_{m}$.

On the one hand finding a primitive $n$\=/th root of unity $\omega$ in $\Z_{m}$, that is, an  $n$\=/th root of unity of order $n$, such that all its powers have invertible differences implies finding a primitive $n$\=/th root of unity $\omega_{(i)}$ in every $\Z_{{p_{i}}^{e_{i}}}$ and then reconstruct $\omega$ using the CRT\@.

That way we have reduced the problem of finding a primitive $n$\=/th root of unity in $\Z_{m}$ for an arbitrary $m$ to finding a primitive $n$\=/th root of unity in $\Z_{p^e}$.

The proof of \Cref{thm:hensel} in~\cite{niven} explicitly tells us how to lift a solution $x_{e}$ modulo $p^{e}$ to a solution $x_{e+1}$ modulo $p^{e+1}$. It can be computed recursively using $x_{e+1} \equiv x_{e}-f(x_{e})\overline{f'(x_{1})} \pmod{p^{e+1}}$ where $\overline{f'(x_{1})}$ is the inverse of $f'(x_{1})$ when considering it in $\Z_{p}$. Recall $f(x)$ was $x^{n} - a$ and therefore $f'(x) = nx^{n-1}$.

The only step of this computation that is not immediate is to compute $\overline{f'(x_{1})}$. Since we are sure $f'(x_{1})$ is not $0$ modulo $p$ we can use the Extended Euclidean algorithm to compute integers $r$ and $s$ so that $f'(x_{1})r + ps = \gcd(f'(x_{1}),p) = 1$ and $r$ would be the desired $\overline{f'(x_{1})}$.

We finally want to analyze under which conditions on $p$ and $n$ do primitive $n$\=/th roots of unity exist in $\Z_{p}$ and how to find them. We obtain necessary conditions from Fermat's Little \Cref{thm:fermat}.

\begin{theorem}[Fermat's Little Theorem as in Theorem~2.7 from~\cite{niven}]\label{thm:fermat}

  Let $p$ be a prime. If $p \not\vert x_{0}$ then
  \[{x_{0}}^{p-1} \equiv 1 \pmod{p}.\]
\end{theorem}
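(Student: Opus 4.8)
The plan is to prove it by the classical rearrangement argument on the nonzero residues modulo $p$, which requires nothing beyond the fact that $\Z_{p}$ has no zero divisors when $p$ is prime.

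First I would note that since $p$ does not divide $x_{0}$, the map $\Z_{p}\to\Z_{p}$ sending $y$ to $x_{0}y$ is injective: if $x_{0}y \equiv x_{0}y' \pmod{p}$ then $p$ divides $x_{0}(y-y')$, and as $p$ is prime and does not divide $x_{0}$, it must divide $y-y'$. Restricting this bijection to the set $\{1,2,\dots,p-1\}$ of nonzero residues, and observing that $x_{0}\cdot j$ is again nonzero modulo $p$ for each $1\leq j\leq p-1$ (same primality argument), the list $x_{0}\cdot 1, x_{0}\cdot 2, \dots, x_{0}\cdot(p-1)$ is, modulo $p$, a permutation of $1,2,\dots,p-1$.

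Next I would multiply all of these congruences together to obtain
\[
\prod_{j=1}^{p-1}\left(x_{0}\cdot j\right) \equiv \prod_{j=1}^{p-1} j \pmod{p},
\]
that is, $x_{0}^{p-1}\,(p-1)! \equiv (p-1)! \pmod{p}$. Since each of $1,2,\dots,p-1$ is coprime to $p$, so is their product $(p-1)!$; hence $(p-1)!$ is invertible modulo $p$ and can be cancelled from both sides, leaving $x_{0}^{p-1}\equiv 1 \pmod{p}$.

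The only delicate point — and it is a mild one — is the final cancellation, which rests on $\Z_{p}$ being an integral domain (equivalently, that a product of units modulo $p$ is again a unit); this is exactly where primality of $p$ is used, and without it the statement genuinely fails. If one prefers to avoid the rearrangement trick altogether, two alternatives work equally well: induct on $x_{0}$ using $\binom{p}{k}\equiv 0 \pmod{p}$ for $0<k<p$ together with the binomial theorem to get $(a+1)^{p}\equiv a^{p}+1 \pmod{p}$, hence $x_{0}^{p}\equiv x_{0} \pmod{p}$ for all $x_{0}$ and then divide by the unit $x_{0}$; or simply invoke Lagrange's theorem applied to the multiplicative group of $\Z_{p}$, which has order $p-1$, so that the order of $x_{0}$ divides $p-1$.
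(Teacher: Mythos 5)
Your proof is correct: the rearrangement argument (the map $y \mapsto x_{0}y$ permutes the nonzero residues, multiply the congruences, cancel the unit $(p-1)!$) is the standard elementary proof, and you correctly isolate where primality is used. Note, however, that the paper does not prove this statement at all --- it is quoted verbatim as Theorem~2.7 of Niven--Zuckerman--Montgomery and used as a black box to derive the necessary condition $n \mid p-1$ for the existence of primitive $n$\-/th roots of unity in $\Z_{p}$ --- so there is no in-paper argument to compare against; any of the three routes you mention (rearrangement, induction via the binomial theorem, or Lagrange's theorem on $\Z_{p}^{*}$) would serve equally well.
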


\begin{corollary}
  If $\Z_{p}$ contains an $n$\=/th root of unity of order $n$ then $n \vert p-1$.
\end{corollary}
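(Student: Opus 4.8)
The plan is to deduce this directly from Fermat's Little \Cref{thm:fermat} together with the elementary fact that the order of a group element divides any exponent that annihilates it. So suppose $\omega \in \Z_{p}$ is an $n$\=/th root of unity of order $n$, meaning $\omega^{n} \equiv 1 \pmod{p}$ and $n$ is the least positive integer with this property.

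First I would check that $\omega$ is invertible modulo $p$, equivalently that $p \nmid \omega$: from $\omega \cdot \omega^{n-1} \equiv \omega^{n} \equiv 1 \pmod{p}$ we see $\omega$ is a unit, so $p \nmid \omega$. Then \Cref{thm:fermat} applies and gives $\omega^{p-1} \equiv 1 \pmod{p}$.

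It remains to argue $n \mid p-1$. Write $p-1 = qn + r$ with $0 \leq r < n$ by the division algorithm. Then
\[\omega^{r} \equiv \omega^{p-1} \cdot {\left(\omega^{n}\right)}^{-q} \equiv 1 \cdot 1 \equiv 1 \pmod{p}.\]
Since $n$ is the order of $\omega$, i.e.\ the least positive exponent annihilating it, and $0 \leq r < n$, we must have $r = 0$, hence $n \mid p-1$. There is no real obstacle here; the only point worth spelling out is the minimality-of-order argument via Euclidean division, and (if one prefers) one could instead simply invoke Lagrange's theorem in the cyclic group $\Z_{p}^{*}$ of order $p-1$.
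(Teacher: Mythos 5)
Your proof is correct and follows the same route as the paper: apply Fermat's Little Theorem to get $\omega^{p-1} \equiv 1 \pmod{p}$ and conclude that the order $n$ divides $p-1$. You merely spell out two details the paper leaves implicit (that $\omega$ is a unit so the theorem applies, and the Euclidean-division argument behind ``the order divides any annihilating exponent''), which is fine.
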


\begin{proof}
  Let $x_{0}$ be an $n$\=/th root of unity of order $n$ in $\Z_{p}$. By \Cref{thm:fermat} we have ${x_{0}}^{p-1} \equiv 1 \pmod{p}$ and therefore its order divides $p-1$, that is, $n \vert p-1$ or $p = kn+1$ for an integer $k$.
\end{proof}

\begin{remark}
  Taking advantage of this condition, in the particular case $f(x) = x^{n}\pm1$, the calculation of $\overline{f'(x_{1})}$, with $x_{1}$ an $n$\=/th root of $\mp1$ modulo $p$, can now be computed explicitly as $\overline{f'(x_{1})} = \pm x_{1}(p-1)/n$, as can be easily checked computing $\overline{f'(x_{1})} \cdot f'(x_{1})$ and getting $(\pm x_{1}(p-1)/n)(n{x_{1}}^{n-1}) \equiv 1 \pmod{p}$.
\end{remark}

\begin{remark}
  This necessary condition rules out all the additional considerations we were having about $n \not\equiv 0 \pmod{p_{i}}$, for example ensuring $m$ has to be odd. Observe that in \Cref{alg:ifft} we have to compute a quotient with $\alpha_{i} - \alpha_{\overline{\imath}}$, that is, $2\alpha_{i}$ in the denominator. Requesting
  \Cref{cond:evalAtRoots,cond:invertibleDifferences,cond:twofold} always implies that twice the unity of the ring has to be invertible (via \Cref{lemma:opuesto,lemma:invertibilityEvaluationPoints}).
\end{remark}

\begin{corollary}
  $\Z_{p}$ contains $n$\=/th roots of unity of order $n$ if and only if $n \vert p-1$.
\end{corollary}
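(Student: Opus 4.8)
The plan is to prove the two implications separately. One direction, namely that the existence of an $n$\=/th root of unity of order $n$ in $\Z_p$ forces $n \mid p-1$, is exactly the content of the preceding corollary (a consequence of \Cref{thm:fermat}), so nothing new is required there.

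For the converse, suppose $n \mid p-1$. I would invoke the classical fact that the multiplicative group $\Z_p^{*}$ of the field $\Z_p$ is cyclic of order $p-1$. Fixing a generator $g$ (a primitive root modulo $p$), the element $\omega = g^{(p-1)/n}$ is well defined since $n \mid p-1$, and because the order of $g^{k}$ in a cyclic group of order $N$ is $N/\gcd(N,k)$, here the order of $\omega$ equals $(p-1)/\gcd\bigl(p-1,(p-1)/n\bigr) = (p-1)/\bigl((p-1)/n\bigr) = n$, using that $(p-1)/n$ divides $p-1$. Hence $\omega$ is an $n$\=/th root of unity of order exactly $n$, which proves the claim.

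Regarding the constructive aspect hinted at earlier (``how to find them''): since $n$ is a power of $2$, write $p-1 = 2^{s}u$ with $u$ odd and $n = 2^{t}$ with $t \le s$. One can then produce such an $\omega$ without fully factoring $p-1$: for a uniformly random $c \in \Z_p^{*}$ the order of $c^{u}$ equals $2^{s}$ with probability $1/2$ (equivalently, $c^{(p-1)/2}\equiv -1$, i.e.\ $c$ is a quadratic non\=/residue), and in that case $c^{(p-1)/n} = \bigl(c^{u}\bigr)^{2^{s-t}}$ has order exactly $2^{t}=n$; a handful of trials then succeeds with overwhelming probability. Alternatively, one finds a primitive root $g$ by testing candidates against the prime divisors of $p-1$. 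I do not expect a genuine obstacle here: the only substantive ingredient is the cyclicity of $\Z_p^{*}$, which is standard, and the mild subtlety is purely algorithmic, namely exhibiting an element of order $n$ efficiently, which the randomized argument above resolves precisely because $n$ is a power of two.
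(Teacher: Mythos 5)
Your proof is correct and follows exactly the paper's route: the forward implication is delegated to the preceding corollary via Fermat's Little Theorem, and the converse uses cyclicity of $\Z_{p}^{*}$ together with $\omega = g^{(p-1)/n}$ for a generator $g$; you merely add the explicit order computation $(p-1)/\gcd(p-1,(p-1)/n)=n$ and an algorithmic aside, both of which are fine but not needed.
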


\begin{proof}
  We have already seen one implication, lets consider now $\Z_{p}$ with $n \vert p-1$. We know the multiplicative group $\Z_{p}^{*}$ is cyclic. Take a generator $g$, it has order $p-1$ and since $n \vert p-1$ we can choose $\omega = g^{(p-1)/n}$ and by construction it would be an $n$\=/th root of unity of order $n$.
\end{proof}

\begin{remark}
  Notice that, by Dirichlet's theorem on arithmetic progressions, there are infinitely many primes of this form.
\end{remark}

Now we can start computing $\alpha\pmod{p_{j}}$ and $\omega\pmod{p_{j}}$.
Choosing $u_{j}$ a quadratic nonresidue in $\Z_{p_{j}}$ we can let $\omega_{(j)} \equiv u_{j}^{(p_{j}-1)/n} \in \Z_{p_{j}}$ and then lift it to $\Z_{{p_{j}}^{e_{j}}}$ using again the constructive proof of \Cref{thm:hensel}.

Notice first that there is no known deterministic polynomial-time algorithm able to find a quadratic nonresidue (see~\cite{AlgorithmicNumberTheory}) however checking if a uniformly random element from $\Z^{*}_{p_{j}}$ is a quadratic nonresidue can be done computing the Legendre symbol $\left(\frac{u_{j}}{p_{j}}\right) \equiv {u_{j}}^{(p_{j}-1)/2}$, as it is $-1$ if and only if it is a quadratic nonresidue modulo $p$ (see Theorem~3.1 from~\cite{niven}), and has a success probability of almost one half.

On the one hand we know $n \vert p_{j}-1$ is a condition for the existence of appropriate roots, so $\omega_{(j)}={u_{j}}^{(p_{j}-1)/n}$ is well defined. Its $n$\=/th power is ${\omega_{(j)}}^{n}={u_{j}}^{p_{j}-1}=1$ (once again by~\Cref{thm:fermat}).

The only thing we have left is to check all its powers have invertible differences.
If it was not the case then $\omega_{(j)}$ would have order $k$ with $k<n$. This cannot be possible because since we already know ${\omega_{(j)}}^{n}=1$ then it would imply $k \vert n$, and if $k$ was a power of $2$ strictly smaller than $n$ we would get a contradiction as, by construction, $-1 = u_{j}^{(p_{j}-1)/2} = {\omega_{(j)}}^{n/2} = {\left({\omega_{(j)}}^{k}\right)}^{n/(2k)} = 1^{n/(2k)} = 1$.

This ensures our final $\omega$ meets the required conditions.

For computing $\alpha$ we might have different approaches. If $a=1$ the trivial solution $\alpha=1$ works perfectly fine for our purposes. If $a=-1$ we can follow an analogous procedure, noticing that as $n$\=/th roots of $-1$ are $2n$\=/th roots of $1$ the condition is now that $2n \vert p_{j}-1$, and let $\alpha_{(j)} \equiv {u_{j}}^{(p_{j}-1)/2n} \pmod{p_{j}}$, once again lifting them and computing the final $\alpha$ from its CRT representation.

If $a \neq \pm 1$ then our alternative would be to make use of the Tonelli–Shanks Algorithm~\cite{shanks1973five}.

\begin{algorithm}[H]
    \linespread{1.35}\selectfont
     \caption{\textsc{Tonelli–Shanks} (\cite{shanks1973five})}\label{alg:TonelliShanks}
  \SetAlgoLined{}
  \KwInput{An odd prime $p$, a quadratic residue $a\in \Z_{p}$ and a nonresidue $u\in\Z_{p}$}
  \KwResult{An element $\alpha\in\Z_{p}$ square root of $a$}
  \nl Let $v$ and $s$ be such that $p-1=v2^{s}$ and $v$ is odd.\\
  \nl Let\\ \Indp
  $\begin{aligned}
    k &\leftarrow s \\
    c &\leftarrow u^{v} \\
    t &\leftarrow a^{v} \\
    r &\leftarrow a^{\frac{v+1}{2}}
  \end{aligned}$ \\ \Indm
  \nl \While{$t \neq 0 \land t\neq 1$}{
    Find least $i$, $0<i<k$, such that $t^{2^{i}}=1$\\
    Let $d \leftarrow c^{2^{k-i-1}}$ and set\\ \Indp
    $\begin{aligned}
       k &\leftarrow i \\
       c &\leftarrow d^{2} \\
       t &\leftarrow td^{2} \\
       r &\leftarrow rd
     \end{aligned}$ \\ \Indm
  }
  \nl \If{$t=0$}{
    \Return{$\alpha = 0$}
  }
  \nl \If{$t=1$}{
    \Return{$\alpha = r$}
  }
\end{algorithm}

This algorithm allows us to compute a square root of a quadratic residue in $\Z_{p}$. Note this algorithm uses again as an auxiliary element a quadratic nonresidue.
Iteratively applying it we can use ${\alpha_{(j)}}^{n/2^{i}} \pmod{p_{j}}$ the $2^{i}$\=/th root of $a$ to compute its square root ${\alpha_{(j)}}^{n/2^{i+1}}$, until we finally reach $\alpha_{(j)}$, from which we can recover $\alpha$.

Coming back again to the case $x^{n}+1$, we describe every step  to compute $(\alpha,\omega)$ in \Cref{alg:alphaOmega}.

\begin{algorithm}[H]
    \linespread{1.35}\selectfont
     \caption{\textsc{Computation of $\alpha$ and $\omega$}}\label{alg:alphaOmega}
  \SetAlgoLined{}
  \KwInput{A power of two $n$ and a modulus $m$ (with known factorization)}
  \KwResult{Suitable $(\alpha,\omega)$ roots of $-1$ and $1$}
  Let $m={p_{1}}^{e_1} \dots {p_{k}}^{e_k}$ be the prime decomposition of $m$.\\
  \nl{} Ensure $2n \vert p_{j} -1$ for every prime and abort otherwise.\\
  \For{$j \gets 1$ \KwTo{} $k$}{
    \nl{} \tcc{obtain a quadratic nonresidue in $\Z_{p_{j}}$}
    $tests = \mathsf{False}$\tcp*{whether a candidate is a nonresidue}
    \While{\KwNot{} $tests$}{
      $u_{j} \dleftarrow \Z_{p_{j}}$\tcp*{choose a nonresidue candidate}
      \If{$u_{j}^{(p_{j}-1)/2} \equiv -1 \pmod{p_{j}}$}{
        $tests = \mathsf{True}$\tcp*{it is a nonresidue}
        }
    }
    \nl{} \tcc{compute $\alpha$ and $\omega \pmod{{p_{j}}^{e_{j}}}$}
    $\alpha_{(j)} = u_{j}^{(p-1)/2n}$\tcp*{compute $\alpha \pmod{p_{j}}$}
    $\omega_{(j)} = u_{j}^{(p-1)/n}$\tcp*{compute $\omega \pmod{p_{j}}$}
    $c,aux = \textsc{ExtEuclides}(n\alpha_{(j)}^{2n-1},p_{j})$\tcp*{compute $\overline{f'(\alpha_{(j)})}$}
    $d,aux = \textsc{ExtEuclides}(n\omega_{(j)}^{n-1},p_{j})$\tcp*{compute $\overline{f'(\omega_{(j)})}$}
    \For(\tcp*[f]{apply Hensel Lemma}){$e \gets 2$ \KwTo{} $e_{j}$}{
      $\alpha_{(j)} \gets \alpha_{(j)} - (\alpha_{(j)}^{n}+1)c \rem{{p_{j}}^{e}}$\tcp*{lift from ${{p_{j}}^{e-1}}$ to ${{p_{j}}^{e}}$}
      $\omega_{(j)} \gets \omega_{(j)} - (\omega_{(j)}^{n}-1)d \rem{{p_{j}}^{e}}$\tcp*{lift from ${{p_{j}}^{e-1}}$ to ${{p_{j}}^{e}}$}
    }
  }
  \nl{} Reconstruct $\alpha$ from $(\alpha_{(1)},\dots,\alpha_{(k)})$\tcp*{via the CRT}
  \nl{} Reconstruct $\omega$ from $(\omega_{(1)},\dots,\omega_{(k)})$\tcp*{via the CRT}
  \Return{$(\alpha,\omega)$}
\end{algorithm}


\section{FFT Generalizations}\label{sec:generalizations}

As we have seen in \Cref{sec:suitableSets} we only have suitable evaluation points in the ring $\Z_{m}[x]/\left<x^{n}+1\right>$ if $m={p_{1}}^{e_1} \dots {p_{k}}^{e_k}$ is such that every $p_{i} \equiv 1 \pmod{2n}$.

These congruences are deeply related to the factorization of $x^n+1$ (irreducible in $\Z[x]$) when considered modulo $m$.
It has been described in~\cite{EC:LyuSei18} where~\Cref{thm:partiallySplitting} is presented.

\begin{theorem}[Corollary~1.2 in~\cite{EC:LyuSei18}]\label{thm:partiallySplitting}
  Let $n \geq d > 1$ be powers of $2$ and $p \equiv 2d + 1 \pmod{4d}$ be a prime. Then the polynomial $x^n + 1$ factors as
  \[x^n+1 \equiv \prod_{j=0}^{d-1}\left(x^{n/d}-\alpha_{j}\right) \pmod{p}\]
  for distinct $\alpha_{j} \in \Z_{p}^{*}$, where $x^{n/d}-\alpha_{j}$ are irreducible in $\Z_{p}[x]$.
\end{theorem}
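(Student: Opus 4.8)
The plan is to obtain the factorization from the cyclic structure of $\Z_p^*$ and then prove irreducibility of the resulting factors via the standard criterion for binomials over finite fields; the key numerical input throughout is the exact $2$-adic valuation of $p-1$. Writing $d = 2^s$, the hypothesis $p \equiv 2d+1 \pmod{4d}$ means $p-1 = 2d(1+2k)$ for an integer $k$, so $v_2(p-1) = s+1$; in particular $2d \mid p-1$ while $4d \nmid p-1$, and (as $d>1$) $4 \mid p-1$. Consequently $\Z_p^*$, being cyclic of order $p-1$, contains a unique subgroup $\mu_{2d}$ of $2d$-th roots of unity, it is cyclic of order exactly $2d$, and $\Z_p^*$ has no element of order $4d$.

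First I would establish the factorization. A root $\alpha$ of $x^d+1$ satisfies $\alpha^{2d}=1$ and $\alpha^d=-1\neq1$ (note $p$ is odd), hence $\alpha\in\mu_{2d}\subseteq\Z_p^*$ but $\alpha\notin\mu_d$, the unique subgroup of index $2$. Since $\beta\mapsto\beta^d$ maps $\mu_{2d}$ onto $\{\pm1\}$ with kernel $\mu_d$, there are exactly $2d-d=d$ such roots, pairwise distinct and nonzero; comparing degrees gives $x^d+1 \equiv \prod_{j=0}^{d-1}(x-\alpha_j)\pmod p$ with distinct $\alpha_j\in\Z_p^*$, each of order exactly $2d$. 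Substituting $x\mapsto x^{n/d}$ into this polynomial identity then yields $x^n+1 \equiv \prod_{j=0}^{d-1}\bigl(x^{n/d}-\alpha_j\bigr)\pmod p$, which is the claimed factorization.

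It remains to prove that each $g_j(x)=x^{n/d}-\alpha_j$ is irreducible over $\Z_p$. Put $t=n/d$, a power of $2$; if $t=1$ the factor is linear, so assume $t\geq2$. I would invoke the classical criterion (e.g.\ Lidl--Niederreiter, Theorem~3.75): for $a\in\Z_p^*$ the binomial $x^t-a$ is irreducible in $\Z_p[x]$ if and only if (i) every prime dividing $t$ divides the order $e$ of $a$ in $\Z_p^*$ but does not divide $(p-1)/e$, and (ii) $p\equiv1\pmod4$ whenever $4\mid t$. Here the only prime dividing $t$ is $2$, and $e=2d=2^{s+1}$, so $2\mid e$; moreover $(p-1)/e$ is odd because $v_2(p-1)=s+1=v_2(e)$, so (i) holds. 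Condition (ii) holds because $4\mid p-1$. Hence every $g_j$ is irreducible and the proof is complete.

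I expect the irreducibility step to be the main obstacle, since it rests on carefully tracking $2$-adic valuations: one must use that each $\alpha_j$ has order exactly $2d$, i.e.\ it generates the full $2$-Sylow subgroup of $\Z_p^*$, and is therefore not a square (and, when $4\mid t$, not a fourth power, even though $-4$ itself is a fourth power in $\Z_p$ because $4\mid p-1$). If one prefers a self-contained argument avoiding the cited criterion, an alternative is to take a root $\zeta$ of $g_j$ in a splitting field over $\Z_p$, check that $\zeta$ is a primitive $2n$-th root of unity, and compute the multiplicative order of $p$ modulo $2n$ to be $n/d$ (using $v_2(p-1)=s+1$ and $v_2(p+1)=1$, which give $v_2(p^{2^k}-1)=s+k+1$); then $[\Z_p(\zeta):\Z_p]=n/d=\deg g_j$, so $g_j$ is the minimal polynomial of $\zeta$ and hence irreducible.
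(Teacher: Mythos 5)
The paper does not prove this statement at all---it imports it verbatim as Corollary~1.2 of~\cite{EC:LyuSei18}---so there is no internal proof to compare against. Your argument is correct and is essentially the standard one (and close in spirit to the proof in the cited reference): the congruence $p\equiv 2d+1\pmod{4d}$ pins down $v_2(p-1)=v_2(2d)$, which gives exactly $d$ distinct roots of $x^d+1$ in the cyclic group $\Z_p^*$, each of order exactly $2d$, and the substitution $x\mapsto x^{n/d}$ yields the claimed factorization; irreducibility of each $x^{n/d}-\alpha_j$ then follows either from the binomial criterion you invoke (whose hypotheses you verify correctly, including $p\equiv 1\pmod 4$ from $d>1$) or from your self-contained alternative computing the order of $p$ modulo $2n$ to be $n/d$. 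Both routes are sound; the second has the advantage of not resting on an external criterion.
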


\Cref{thm:partiallySplitting} can also be generalized to a not necessarily prime modulus $m$ taking advantage of the results discussed in the previous sections.

\begin{theorem}[Generalization of \Cref{thm:partiallySplitting} to a not necessarily prime modulus $m$]\label{thm:partiallySplittingNonprime}
  Let $n \geq d > 1$ be powers of $2$, $m={p_{1}}^{e_1} \dots {p_{k}}^{e_k}$ the prime decomposition of $m$ such that $p_{i} \equiv 2d + 1 \pmod{4d}$. Then the polynomial $x^n + 1$ factors as
  \[x^n+1 \equiv \prod_{j=0}^{d-1}\left(x^{n/d}-\alpha_{j}\right) \pmod{m}\]
  for distinct $\alpha_{j} \in \Z_{m}^{*}$.
\end{theorem}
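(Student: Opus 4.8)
The plan is to argue prime‑power by prime‑power and then glue with the Chinese Remainder Theorem, writing $m={p_{1}}^{e_1}\dots{p_{k}}^{e_k}$. For each $i$ substitute $y=x^{n/d}$, so that $x^{n}+1=y^{d}+1$. Since $p_{i}\equiv 2d+1\pmod{4d}$ we have in particular $p_{i}\equiv 1\pmod{2d}$, hence $p_{i}$ is odd and $p_{i}\nmid d$; moreover the hypothesis of \Cref{thm:partiallySplitting} is met, so that theorem yields $d$ distinct elements $\alpha_{0}^{(i)},\dots,\alpha_{d-1}^{(i)}\in\Z_{p_{i}}^{*}$ (the $d$ distinct $d$\=/th roots of $-1$ modulo $p_{i}$) with $y^{d}+1\equiv\prod_{j=0}^{d-1}(y-\alpha_{j}^{(i)})\pmod{p_{i}}$.

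First I would lift each $\alpha_{j}^{(i)}$ to $\Z_{{p_{i}}^{e_i}}$. Applying \Cref{cor:rootsCorrespondence} (that is, iterating \Cref{thm:hensel}) to $x^{d}-(-1)$, whose derivative $d\,x^{d-1}$ is nonzero modulo $p_{i}$ at every $d$\=/th root of $-1$ because $p_{i}\nmid d$, I obtain $d$ elements $\widetilde{\alpha}_{j}^{(i)}\in\Z_{{p_{i}}^{e_i}}$ that are $d$\=/th roots of $-1$ and reduce to the distinct $\alpha_{j}^{(i)}$ modulo $p_{i}$. Consequently any two of them differ by a unit of $\Z_{{p_{i}}^{e_i}}$, and each $\widetilde{\alpha}_{j}^{(i)}$ is itself a unit, its $d$\=/th power being $-1$.

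Next I would upgrade "$d$ roots" to a full factorization modulo ${p_{i}}^{e_i}$. Division by the monic polynomial $y-\widetilde{\alpha}_{0}^{(i)}$ is valid over any commutative ring, and since $\widetilde{\alpha}_{0}^{(i)}$ is a root we get $y^{d}+1=(y-\widetilde{\alpha}_{0}^{(i)})\,q(y)$ with $q$ monic of degree $d-1$; evaluating at another root $\widetilde{\alpha}_{j}^{(i)}$ and using that $\widetilde{\alpha}_{j}^{(i)}-\widetilde{\alpha}_{0}^{(i)}$ is a unit forces $q(\widetilde{\alpha}_{j}^{(i)})=0$, so induction on $d$ gives $y^{d}+1\equiv\prod_{j=0}^{d-1}(y-\widetilde{\alpha}_{j}^{(i)})\pmod{{p_{i}}^{e_i}}$. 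Finally, for each $j$ I would use the isomorphism $\Z_{m}\cong\prod_{i}\Z_{{p_{i}}^{e_i}}$ to define $\alpha_{j}\in\Z_{m}$ as the unique element with $\alpha_{j}\equiv\widetilde{\alpha}_{j}^{(i)}\pmod{{p_{i}}^{e_i}}$ for every $i$. Then $\alpha_{j}\in\Z_{m}^{*}$ (a unit modulo each ${p_{i}}^{e_i}$), the $\alpha_{j}$ are pairwise distinct (already distinct modulo ${p_{1}}^{e_1}$), and for each $i$ we have $\prod_{j}(y-\alpha_{j})\equiv\prod_{j}(y-\widetilde{\alpha}_{j}^{(i)})\equiv y^{d}+1\pmod{{p_{i}}^{e_i}}$; by the CRT this congruence holds modulo $m$, and substituting $y=x^{n/d}$ back gives $x^{n}+1\equiv\prod_{j=0}^{d-1}(x^{n/d}-\alpha_{j})\pmod{m}$.

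The one point deserving care — which I would flag as the main (minor) obstacle — is the passage from "a monic polynomial with $d$ roots" to "a product of $d$ linear factors" over the non‑domain $\Z_{{p_{i}}^{e_i}}$: this is exactly where invertibility of the pairwise differences of the lifted roots (supplied by \Cref{cor:rootsCorrespondence}) is indispensable, and it is also what makes the CRT gluing of the separate factorizations legitimate. Everything else is routine bookkeeping with the CRT and with the congruence $p_{i}\equiv 2d+1\pmod{4d}$; note in particular that, unlike \Cref{thm:partiallySplitting}, the statement claims no irreducibility of the factors $x^{n/d}-\alpha_{j}$ over $\Z_{m}[x]$, so no further work is needed on that front.
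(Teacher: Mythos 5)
Your proof is correct, but it follows a genuinely different route from the paper's. The paper never invokes \Cref{thm:partiallySplitting} at all: it takes as given (from \Cref{sec:existenceAndConstruction}) a twofold set $\alpha_{0},\dots,\alpha_{d-1}$ of $d$\-/th roots of $-1$ with invertible differences directly in $\Z_{m}$, and then establishes the product identity by a telescoping argument — \Cref{lemma:opuesto} gives $\alpha_{\overline{\jmath}}=-\alpha_{j}$, so paired factors collapse as $\left(x^{n/d}-\alpha_{j}\right)\left(x^{n/d}+\alpha_{j}\right)=x^{2n/d}-{\alpha_{j}}^{2}$, and \Cref{prop:squaring} lets this be iterated $\log d$ times down to the single factor $x^{n}-{\alpha_{0}}^{d}=x^{n}+1$. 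That computation proves the factorization outright, with no division algorithm, no per-prime analysis, and no explicit CRT gluing. You instead reduce to the prime-field case via \Cref{thm:partiallySplitting}, lift with Hensel, prove the factorization modulo each ${p_{i}}^{e_i}$ by successive monic division (correctly identifying invertibility of the differences as the step that rescues the argument over a non-domain), and glue with the CRT. Your route is the more classical one and makes explicit a "roots with unit differences give a full linear factorization" lemma that the paper sidesteps entirely; the paper's route is shorter, stays inside its own framework of twofold sets, and produces the factorization in $\Z_{m}$ in one stroke. Both are sound; the only mild redundancy in yours is that the existence machinery of \Cref{sec:existenceAndConstruction} already packages the Hensel lifting and CRT reconstruction you redo by hand.
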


\begin{proof}
  Notice first $p_{i} \equiv 2d + 1 \pmod{4d}$ implies $p_{i} \equiv 1 \pmod{2d}$, therefore by the results discussed in \Cref{sec:existenceAndConstruction} we know there exists a twofold set of $d$\=/th roots of $-1$ with invertible differences $\alpha_{0},\dots,\alpha_{d-1}$.
  From \Cref{lemma:invertibilityEvaluationPoints} we know each $\alpha_{i} \in \Z_{m}^{\ast}$. We also know from \Cref{lemma:opuesto} that

  \[ \prod_{j=0}^{d-1}\left(x^{n/d}-\alpha_{j}\right) \equiv \prod_{j=0}^{d/2-1}\left(x^{n/d}-\alpha_{j}\right)\left(x^{n/d}-\alpha_{\overline{\jmath}}\right) \equiv \prod_{j=0}^{d/2-1}\left(x^{2n/d}-{\alpha_{j}}^{2}\right). \]

  Given that squares preserve the initial properties (as seen in \Cref{{prop:squaring}}) iteratively applying the same idea we finally get $\displaystyle\prod_{j=0}^{d-1}\left(x^{n/d}-\alpha_{j}\right) \equiv x^{n} + 1 \pmod{m}$ as desired.
\end{proof}

This implies that, under the necessary conditions for an FFT multiplication algorithm, $x^{n}+1$ fully splits in linear factors when considered modulo $m$. However sometimes this is not the case and we are enforced to use modulus that specifically require $x^{n}+1$ to split in a smaller number of factors.

That is the case of some cryptographic constructions that use \Cref{thm:partiallySplitting} from~\cite{EC:LyuSei18} (or similar versions) to guarantee the invertibility of particular subsets of elements in $\Z_{m}[x]/\left<x^{n}+1\right>$.

The condition $p_{i} \equiv 2d + 1 \pmod{4d}$ implies that there are $d$\=/th roots of $-1$ in $\Z_{m}$, but no $2d$\=/th roots of $-1$. Therefore if $d<n$ no suitable evaluation points exist.

However having only $d$\=/th roots of $-1$ does not prevent us from finding a reasonably efficient multiplication algorithm. We cannot complete the recursion strategy but we still can partially use it.

To do so we just need to apply a technique called $n/d$\=/degree \emph{striding} (as described in~\cite{Bernstein01}), mapping our polynomials to a more convenient ring, defining an auxiliary new variable $y=x^{n/d}$. We can always consider $R[x]/\left<x^{n}+1\right>$ as a subring of $R[x,y]/\left<x^{n/d}-y,x^{n}+1\right>$, and observe we can also describe this second ring as $R[x][y]/\left<x^{n/d}-y,y^{d}+1\right>$.

With this simple change of variables we can now represent our original polynomial as a polynomial in $y$ with $y$\=/degree bounded by $d$ that has as coefficients polynomials in $R[x]$ of $x$\=/degree bounded by $n/d$. As a polynomial in $y$ it satisfies all required conditions as we are only considering modulus $y^{d}+1$ and the new ring $R'=R[x]$ does contain $d$ evaluation points $\alpha_{0},\dots,\alpha_{d-1}$ satisfying \Cref{cond:evalAtRoots,cond:invertibleDifferences,cond:twofold}.

With these $d$\=/th roots of $-1$ we can efficiently use $g(x,y)$ and $h(x,y)$ to compute evaluations $\{g(x,\alpha_{i})\}$ and $\{h(x,\alpha_{i})\}$, do a pointwise product in $R[x]$ (using an auxiliary multiplication algorithm, such as (Dual) Karatsuba) and invert the transform to recover the product $(g \cdot h)(x,y)$ that directly gives us the desired solution substituting again $y$ with $x^{n/d}$.

Observe that, the same way we have to choose an $\alpha$ that is a $d$\=/th root of $-1$ to make the evaluation compatible with the quotient $\left<y^{d}+1\right>$ the other quotient $\left<x^{n/d}-y\right>$ implies that when computing the evaluation of the variable $y$ at $\alpha$ we obtain as a result a polynomial in $R[x]/\left<x^{n/d}-\alpha\right>$ and therefore is only defined modulo $x^{n/d}-\alpha$. This is an artifact of the technique due to the additional variable introduced that has no impact in the process but helps us keep these $x$-polynomials bounded when computing their products.

The running time for both the transform and the anti-transform is now $\bigO{n\log{d}}$ and the $d$ products of polynomials of degree $n/d$ that require each $\bigO{{(n/d)}^{\log{3}}}$ for a total of $\bigO{n\log{d} + d{(n/d)}^{\log{3}}}$.

From an abstract point of view we could directly apply \Cref{alg:efficientMultiplication}, as it was described for a general ring $R$ and therefore we could use it just taking into account to which ring each element belongs. However, for the sake of readability, we explicitate this generalization in \Cref{alg:generalizedEfficientMultiplication}.

\begin{algorithm}[H]
  \linespread{1.35}\selectfont
   \caption{\textsc{Generalized Efficient FFT Multiplication}}\label{alg:generalizedEfficientMultiplication}
\SetAlgoLined{}
\KwInput{Two polynomials $g(x)$, $h(x)$ of degree bounded by $n$}
\KwAuxiliary{A twofold set $\alpha_{0},\dots,\alpha_{d-1}$ of $d$\=/th roots of $-1$ with invertible differences}
\KwResult{The product $(g \cdot h)(x)$ of $g(x)$ and $h(x)$ in $\Z_{m}[x]/\left< x^n + 1 \right>$}
\nl{} $\widehat{g}(y) \coloneqq g(x) \rem{x^{n/d}-y}$ \tcp*{polynomial in $\left(\Z_{m}[x]\right)[y]$}
\nl{} $\widehat{h}(y) \coloneqq h(x) \rem{x^{n/d}-y}$ \tcp*{polynomial in $\left(\Z_{m}[x]\right)[y]$}
\nl{} $\vec{g} \coloneqq \textsc{FFT}(\widehat{g}(y),\alpha_{0}, \dots, \alpha_{d-1})$ \tcp*{vector of polynomials in $\Z_{m}[x]$}
\nl{} $\vec{h} \coloneqq \textsc{FFT}(\widehat{h}(y),\alpha_{0}, \dots, \alpha_{d-1})$ \tcp*{vector of polynomials in $\Z_{m}[x]$}
Define $\vec{f}$ a vector of polynomials in $\Z_{m}[x]$ of size $d$.\\
\nl{} \For{$i \gets 0$ \KwTo{} $d-1$}{
  $\vec{f}[i] \coloneqq \textsc{Karatsuba}(\vec{g}[i],\vec{h}[i]) \rem x^{n/d} - \alpha_{i}$  \\
}
\nl{} $\widehat{f}(y) \coloneqq \textsc{IFFT}(\vec{f},\alpha_{0}, \dots, \alpha_{d-1})$ \\
\nl{} $f(x) \coloneqq \widehat{f}(y) \rem y - x^{n/d}$ \\
\Return{$f(x)$}
\end{algorithm}

\subsection{Fast Chinese Remaindering}

As we mentioned in the introduction, this particular issue of partially splitting rings, where we cannot directly apply the original full FFT to the initial polynomials, has been studied in~\cite{EC:LyuSei18} for rings with prime modulus from a different point of view, considering FFT-like algorithms for efficiently applying the Chinese Reminder Theorem~\cite{EC:LyuPeiReg13,gathen_gerhard_2013}.

The main idea of these CRT approaches is to consider evaluations at $\alpha_{i}$ as representatives for $g(x) \pmod{x-\alpha_{i}}$, sufficient for determining $g(x)$ via the CRT since $x^{n}+1 \equiv \prod_{j=0}^{n-1}\left(x-\alpha_{j}\right) \pmod{m}$ (as we know from \Cref{,thm:partiallySplitting,thm:partiallySplittingNonprime}).
The same kind of recursions apply, as both $g(x) \pmod{x-\alpha_{i}}$ and $g(x) \pmod{x-\alpha_{\overline{\imath}}}$ can be computed from $g(x) \pmod{x^{2}-{\alpha_{i}}^{2}}$ (in an equivalent manner to what we saw in \Cref{sec:efficientTrans}).

Given an twofold set $\alpha_{0},\dots,\alpha_{d-1}$ of $d$\=/th roots of $-1$ what we do in \Cref{alg:generalizedEfficientMultiplication} is precisely computing the reminders $\prem{x^{n/d} - \alpha_{i}}$ that determine the original polynomials via the CRT\@.

However we believe our presentation is still more direct and informative, since the generalization to a not necessarily prime modulus in a partially splitting ring, not explored in~\cite{EC:LyuSei18}, comes completely for free, while an interpretation using the Chinese Reminder Theorem would technically require a much more involved analysis when the ring is not a Principal Ideal Domain, or not even a Unique Factorization Domain, as in order to verify the hypothesis of the theorem one should check whether some ideals are comaximal in order to ensure that every mapping is indeed an isomorphism, and some conditions (as the requirement for $m$ to be odd) would only appear as artifacts of the construction.

  Recall the evaluation $g(\alpha)$ of any polynomial in $\alpha$ is equivalent to computing its remainder after dividing by $x-\alpha$ (this is known as the Polynomial Remainder Theorem).

  All the evaluations at each of the $\alpha_{i}$ are just $g(\alpha_{i}) = g(x) \prem{x-\alpha_{i}}$. We can see the vector whose components are these evaluations as the CRT representation of the polynomial. Pointwise multiplication of these vectors of evaluations for two polynomials $g(x)$ and $h(x)$ is just a multiplication in the CRT domain and the interpolation consists of recovering the polynomial coefficients from the CRT representations.

  It seems just like a different interpretation of the same idea. It allows the same analysis as the CRT representation over a set of factors can be computed even if $x^n-a$ does not fully split.

  %
  %
  %

  The important point is that this argument works even if $d \neq n$, and we could use it to represent any polynomial with $d$ remainders $\prem{x^{n/d} - \alpha_{i}}$, compute the pointwise products among polynomials of degree bounded by $n/d$ and then recover back the product polynomial modulo ${x^n-a}$.

  \begin{theorem}[CRT for {$\Z_{m}[x]/\left<x^{n}-a\right>$}]\label{thm:splittingCRT}
    Let $x^{n}-a = \prod_{i=0}^{d-1}(x^{n/d}-\alpha_{i})$ where ${\{\alpha_{i}\}}_{i}$ are a twofold set of $d$\=/th roots of $a$ with differences invertible in $\Z_{m}$,
    then $\Z_{m}[x]/\left<x^n-a\right> \cong \Z_{m}[x]/\left<x^{n/d}-\alpha_{0}\right> \times \cdots \times \Z_{m}[x]/\left<x^{n/d}-\alpha_{d-1}\right>$.
  \end{theorem}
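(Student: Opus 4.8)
The plan is to deduce the isomorphism from the Chinese Remainder Theorem for commutative rings, applied to $R = \Z_m[x]$ and the principal ideals $I_i = \langle x^{n/d} - \alpha_i\rangle$. Two facts suffice: (i) the product ideal $I_0 I_1 \cdots I_{d-1}$ equals $\langle x^n - a\rangle$, and (ii) the ideals $I_0,\dots,I_{d-1}$ are pairwise comaximal, i.e. $I_i + I_j = R$ whenever $i \neq j$. Granting both, the CRT gives $R / (I_0 \cdots I_{d-1}) \cong \prod_i R/I_i$ through the natural map sending $g(x)$ to the tuple of its remainders modulo each $x^{n/d} - \alpha_i$, which is exactly the claimed statement.

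For (i), the factorization $x^n - a = \prod_{i=0}^{d-1}(x^{n/d} - \alpha_i)$ in $\Z_m[x]$ is obtained exactly as in the proof of \Cref{thm:partiallySplittingNonprime}: pairing the index $j$ with $\overline{\jmath}$ and invoking \Cref{lemma:opuesto} (valid because the $\alpha_i$ form a twofold set with invertible differences) gives $(x^{n/d}-\alpha_j)(x^{n/d}-\alpha_{\overline{\jmath}}) = x^{2n/d} - \alpha_j^2$; since \Cref{prop:squaring} guarantees $\alpha_0^2,\dots,\alpha_{d/2-1}^2$ is again a twofold set of $d/2$\=/th roots of $a$ with invertible differences, iterating $\log d$ times collapses the product to $x^n - a$. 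As the product of principal ideals in a commutative ring is the principal ideal generated by the product of their generators, $I_0 \cdots I_{d-1} = \langle \prod_i (x^{n/d}-\alpha_i)\rangle = \langle x^n - a\rangle$.

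For (ii), fix $i \neq j$. The element $\alpha_j - \alpha_i$ is invertible in $\Z_m$ by hypothesis, and $(x^{n/d}-\alpha_i) - (x^{n/d}-\alpha_j) = \alpha_j - \alpha_i$, so $(\alpha_j - \alpha_i)^{-1}\left((x^{n/d}-\alpha_j) - (x^{n/d}-\alpha_i)\right) = 1 \in I_i + I_j$. This is precisely where \Cref{cond:invertibleDifferences} is indispensable: it promotes the naive coprimality of the factors to genuine comaximality of the ideals, which is what the CRT actually requires. From pairwise comaximality the standard inductive argument shows $I_0$ is comaximal with $I_1 \cdots I_{d-1}$, hence $\bigcap_i I_i = \prod_i I_i$ and $R/\prod_i I_i \xrightarrow{\ \sim\ } \prod_i R/I_i$.

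The main difficulty is simply to keep the argument honest over $\Z_m[x]$, which need not be a principal ideal domain nor even a unique factorization domain: one may not appeal to ``coprime factors'' and must check comaximality of ideals directly. As shown above this check is immediate and rests entirely on the invertibility of the differences $\alpha_i - \alpha_j$; once \Cref{cond:invertibleDifferences} is available, everything else is routine bookkeeping.
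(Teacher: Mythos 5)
Your proof is correct and rests on the same key observation as the paper's, namely that $(x^{n/d}-\alpha_i)-(x^{n/d}-\alpha_j)=\alpha_j-\alpha_i$ is a unit of $\Z_m$ and hence the ideals $I_i=\left<x^{n/d}-\alpha_i\right>$ are pairwise comaximal; but the two arguments are organized differently. You invoke the general Chinese Remainder Theorem for $d$ pairwise comaximal ideals of a commutative ring in a single flat step, citing the standard facts that pairwise comaximality gives $\bigcap_i I_i=\prod_i I_i$ and the isomorphism onto the product of the quotients. The paper instead recurses along the binary tree of roots, proving only the two-ideal case at each node and doing so from scratch: it exhibits the B\'ezout identity for the two generators, deduces surjectivity explicitly, and verifies by hand that the intersection of the two ideals equals the ideal generated by their product. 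Your route is shorter and leans on a textbook theorem valid in any commutative ring with unity (no PID or UFD hypothesis is needed once comaximality is checked), which is arguably the cleanest way to make the point of the paper's closing remark; the paper's route buys self-containedness and mirrors the recursive structure of \Cref{alg:fftCRTalt,alg:ifftCRTalt}. Two minor remarks: the factorization $x^n-a=\prod_{i}(x^{n/d}-\alpha_i)$ is already part of the theorem's hypothesis, so your step (i) only needs the observation that a product of principal ideals is generated by the product of the generators; and $(\alpha_j-\alpha_i)^{-1}\bigl((x^{n/d}-\alpha_j)-(x^{n/d}-\alpha_i)\bigr)$ equals $-1$ rather than $1$, though of course $-1\in I_i+I_j$ still yields $1\in I_i+I_j$.
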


  \begin{proof}
    For convenience we are going to label the $d$ roots as $\alpha_{b_{\log(d)-1}\dots b_{0}}$ as we did in $\Cref{fig:binarytree}$.

    From an argument analogous to \Cref{thm:partiallySplittingNonprime} we know that we can write $x^{n/{2^k}} - \alpha_{b_{k-1}\dots b_{0}} = (x^{n/{2^{k+1}}} - \alpha_{0b_{k-1}\dots b_{0}})(x^{n/{2^{k+1}}} - \alpha_{1b_{k-1}\dots b_{0}})$.

    We have to prove that $\Z_{m}[x] \Big/ \left<x^{n/2^k} - \alpha_{b_{k-1}\dots b_{0}}\right>$ is isomorphic to
    \[\Z_{m}[x] \Big/ \left<x^{n/2^{k+1}} - \alpha_{0b_{k-1}\dots b_{0}}\right>
    \times \Z_{m}[x]\Big/\left<x^{n/2^{k+1}} - \alpha_{1b_{k-1}\dots b_{0}}\right>.
    \]

    We can define a map from $\Z_{m}[x]$ to
    \[\Z_{m}[x] \Big/ \left<x^{n/2^{k+1}} - \alpha_{0b_{k-1}\dots b_{0}}\right>
    \times \Z_{m}[x]\Big/\left<x^{n/2^{k+1}} - \alpha_{1b_{k-1}\dots b_{0}}\right>\]
     by computing $\trem$, and the kernel would be
     \[\left<x^{n/2^{k+1}} - \alpha_{0b_{k-1}\dots b_{0}}\right> \bigcap \left<x^{n/2^{k+1}} - \alpha_{1b_{k-1}\dots b_{0}}\right>.\]

     That is, the map is also well defined from
     \[\Z_{m}[x] \Big/ \left<x^{n/2^{k+1}} - \alpha_{0b_{k-1}\dots b_{0}}\right> \bigcap \left<x^{n/2^{k+1}} - \alpha_{1b_{k-1}\dots b_{0}}\right>.\]

     So far this discussion has been completely general. However for this particular polynomials we have
     \[\left<x^{n/2^{k+1}} - \alpha_{0b_{k-1}\dots b_{0}}\right> + \left<x^{n/2^{k+1}} - \alpha_{1b_{k-1}\dots b_{0}}\right> \cong \Z_{m}[x].\]
     This is the case since
     \[(x^{n/2^{k+1}} - \alpha_{0b_{k-1}\dots b_{0}}) - (x^{n/2^{k+1}} - \alpha_{1b_{k-1}\dots b_{0}}) = \alpha_{1b_{k-1}\dots b_{0}} - \alpha_{0b_{k-1}\dots b_{0}}\]
     which is invertible in $\Z_{m}$ (as every difference of roots is invertible), implying
     \[1 \in \left<x^{n/2^{k+1}} - \alpha_{0b_{k-1}\dots b_{0}}\right> + \left<x^{n/2^{k+1}} - \alpha_{1b_{k-1}\dots b_{0}}\right>.\]

     We can explicitly write this saying there are two polynomials $g(x)$ and $h(x)$ such that
     \[(x^{n/2^{k+1}} - \alpha_{0b_{k-1}\dots b_{0}})g(x) + (x^{n/2^{k+1}} - \alpha_{1b_{k-1}\dots b_{0}})h(x)=1.\]

     On the one hand this implies that the map is surjective. From the previous identity we know any pair of polynomials $(a(x),b(x))$ has a preimage $b(x)g(x)(x^{n/2^{k+1}} - \alpha_{0b_{k-1}\dots b_{0}})+a(x)h(x)(x^{n/2^{k+1}} - \alpha_{1b_{k-1}\dots b_{0}})$.

     On the other hand this implies
     \[\left<x^{n/2^{k+1}} - \alpha_{0b_{k-1}\dots b_{0}}\right> \bigcap \left<x^{n/2^{k+1}} - \alpha_{1b_{k-1}\dots b_{0}}\right> = \left<x^{n/2^k} - \alpha_{b_{k-1}\dots b_{0}}\right>.\]

     The right-hand side is directly a subset of the left-hand side. To see the other inclusion we can check that for any
     \[a(x) \in \left<x^{n/2^{k+1}} - \alpha_{0b_{k-1}\dots b_{0}}\right> \bigcap \left<x^{n/2^{k+1}} - \alpha_{1b_{k-1}\dots b_{0}}\right>,\]
     that is, there are polynomials $b(x)$ and $c(x)$ such that
     \[a(x)=b(x)(x^{n/2^{k+1}} - \alpha_{0b_{k-1}\dots b_{0}}) = c(x)(x^{n/2^{k+1}} - \alpha_{1b_{k-1}\dots b_{0}}),\]
     it is also true that
     \begin{align*}
       a(x) &= a(x) \cdot 1 \\
        &= a(x)(x^{n/2^{k+1}} - \alpha_{0b_{k-1}\dots b_{0}})g(x) + a(x)(x^{n/2^{k+1}} - \alpha_{1b_{k-1}\dots b_{0}})h(x)\\
        &=c(x)(x^{n/2^{k+1}} - \alpha_{1b_{k-1}\dots b_{0}})(x^{n/2^{k+1}} - \alpha_{0b_{k-1}\dots b_{0}})g(x) \\
        &\phantomrel{=}+
        b(x)(x^{n/2^{k+1}} - \alpha_{0b_{k-1}\dots b_{0}})(x^{n/2^{k+1}} - \alpha_{1b_{k-1}\dots b_{0}})h(x)\\
        &= (c(x)g(x)+b(x)h(x))(x^{n/2^k} - \alpha_{b_{k-1}\dots b_{0}})
     \end{align*}
     and therefore $a(x) \in \left<x^{n/2^k} - \alpha_{b_{k-1}\dots b_{0}}\right>$.

     Summing up the desired mapping is an isomorphism.

     The main issue here is that CRT is usually defined for principal ideal domains or at least unique factorization domains. If it is not the case, such as with our construction, we have to specifically check these additional properties, such as $\left<x^{n/2^{k+1}} - \alpha_{0b_{k-1}\dots b_{0}}\right>$ and $\left<x^{n/2^{k+1}} - \alpha_{1b_{k-1}\dots b_{0}}\right>$ being comaximal.
  \end{proof}

    The task we have is defined as follows. We have a polynomial $g(x) \in \Z_{m}[x]/\left<x^n-a\right>$ and a twofold set $\{\alpha_{b_{\log(d)-1}b_{\log(d)-2}\dots b_{0}}\}$ of $d$\=/th roots of $a$ with invertible differences.

    Our goal is to compute all $g_{b_{\log(d)-1}\dots b_{0}}(x) = g(x) \prem{x^{n/d}-\alpha_{b_{\log(d)-1}\dots b_{0}}}$.

    To do so we start computing $g_{0}(x) = g(x) \prem{x^{n/2}-\alpha_{0}}$ and $g_{1}(x) = g(x) \prem{x^{n/2}-\alpha_{1}}$.
    This requires $\bigO{n}$ operations.

    Then we notice $g_{0b_{0}}(x) = g(x) \prem{x^{n/4}-\alpha_{0b_{0}}} = g_{b_{0}}(x) \prem{x^{n/4}-\alpha_{0b_{0}}}$ and $g_{1b_{0}}(x) = g(x) \prem{x^{n/4}-\alpha_{1b_{0}}} = g_{b_{0}}(x) \prem{x^{n/4}-\alpha_{1b_{0}}}$.

    In general we can recursively compute $g_{b_{i}b_{i-1} \dots b_{0}}(x) = g(x) \prem{x^{n/2^{i+1}}-\alpha_{b_{i}b_{i-1} \dots b_{0}}} = g_{b_{i-1} \dots b_{0}}(x) \prem{x^{n/2^{i+1}}-\alpha_{b_{i}b_{i-1} \dots b_{0}}}$. Therefore at this $i$\=/th level computing each remainder takes $\bigO{n/2^{i}}$ operations and has to be done $\bigO{2^{i}}$ times, for a total cost of $\bigO{n}$.

    Observe computing $g_{b_{i-1} \dots b_{0}}(x) \prem{x^{n/2^{i+1}}-\alpha_{0b_{i-1} \dots b_{0}}}$ is done taking the lower coefficients of $g_{b_{i-1} \dots b_{0}}(x)$ and adding the higher coefficients multiplied by $\alpha_{0b_{i-1} \dots b_{0}}$.

    The same way, since $\alpha_{1b_{i-1} \dots b_{0}} = - \alpha_{0b_{i-1} \dots b_{0}}$  we have that $g_{b_{i-1} \dots b_{0}}(x) \prem{x^{n/2^{i+1}}-\alpha_{1b_{i-1} \dots b_{0}}}$ is computed taking the lower coefficients of $g_{b_{i-1} \dots b_{0}}(x)$ and subtracting the higher coefficients multiplied by $\alpha_{0b_{i-1} \dots b_{0}}$.

    Notice the multiplications are the same (and could be reused) and the only difference is that we add or subtract depending on the case.

    The number of levels is $\log(d)$ and we end up with $\bigO{n\log(d)}$ operations to compute the CRT representation of $g(x)$ taking modulus over the $d$ different polynomials of degree $n/d$.

    \begin{algorithm}[H]
      \linespread{1.35}\selectfont
       \caption{\textsc{FFT} (CRT)}\label{alg:fftCRTalt}
    \SetAlgoLined{}
    \KwInput{A polynomial $g(x)$ of degree bounded by $n$ and a twofold set $\alpha_{0}, \dots, \alpha_{d-1}$ of $d$\=/th roots of $a$ with invertible differences}
    \KwResult{Remainders of $g(x)$ when divided by $x^{n/d} -\alpha_{0}, \dots,x^{n/d} - \alpha_{d-1}$}
    \lIf{$r = 1$}{\Return{$g$}}r
    $\{g_{b_{\log(d)-2} \dots b_{0}}\} \coloneqq \textsc{FFT}(g,{\alpha_{0}}^{2}, \dots, {\alpha_{d/2-1}}^{2})$\\
    \For{$b_{\log(d)-2} \dots b_{0} \gets {\{0,1\}}^{\log(d)-1}$}{
      Split $g_{b_{\log(d)-2} \dots b_{0}}$ into $g_{b_{\log(d)-2} \dots b_{0}}^{L}$ and $g_{b_{\log(d)-2} \dots b_{0}}^{H}$\\
      $g_{0b_{\log(d)-2} \dots b_{0}} \coloneqq g_{b_{\log(d)-2} \dots b_{0}}^{L} + g_{b_{\log(d)-2} \dots b_{0}}^{H} \cdot \alpha_{0b_{\log(d)-2} \dots b_{0}}$\\
      $g_{1b_{\log(d)-2} \dots b_{0}} \coloneqq g_{b_{\log(d)-2} \dots b_{0}}^{L} - g_{b_{\log(d)-2} \dots b_{0}}^{H} \cdot \alpha_{0b_{\log(d)-2} \dots b_{0}}$\\
    }
    \Return{$\{g_{b_{\log(d)-1}b_{\log(d)-2} \dots b_{0}}\}$}
    \end{algorithm}

    At this point we could use Karatsuba's algorithm to multiply them, that is, a cost of $\bigO{d{(n/d)}^{\log_{2}(3)}}$.

    Then we have to invert this operations. To do so we could follow the same ideas inverting the operations at each level. To recover the lower part of $g_{b_{i-1} \dots b_{0}}(x)$ we add $g_{0b_{i-1} \dots b_{0}}(x) + g_{1b_{i-1} \dots b_{0}}(x)$ and divide by two (that is, multiply each coefficient by $2^{-1}$). To recover the upper part we now subtract them computing $g_{0b_{i-1} \dots b_{0}}(x) - g_{1b_{i-1} \dots b_{0}}(x)$ and multiply each coefficient by $2^{-1}{(\alpha_{0b_{i-1} \dots b_{0}})}^{-1}$. The cost of this operations is again $\bigO{n\log(d)}$.

    As we have to divide by $2$ at each level one could just skip this step and divide by $d$ at the end.

    \begin{algorithm}[H]
      \linespread{1.35}\selectfont
       \caption{\textsc{IFFT} (CRT)}\label{alg:ifftCRTalt}
    \SetAlgoLined{}
    \KwInput{A CRT representation $g_{0} \dots g_{d-1}$ of a polynomial $g(x)$ of degree bounded by $n$, and a twofold set $\alpha_{0}, \dots, \alpha_{d-1}$ of $d$\=/th roots of $a$ with invertible differences}
    \KwResult{The polynomial $g(x)$}
    \lIf{$r=2$}{\Return{$2^{-1}\left((g_{0}+g_{1})+x^{n/2}(g_{0}+g_{1})\cdot {\alpha_{0}}^{-1}\right)$}}
    \For{$b_{\log(d)-2} \dots b_{0} \gets {\{0,1\}}^{\log(d)-1}$}{
       $\begin{aligned}
         &g_{b_{\log(d)-2} \dots b_{0}} \coloneqq 2^{-1}(g_{0b_{\log(d)-2} \dots b_{0}}+g_{1b_{\log(d)-2} \dots b_{0}}) + \\
          &\phantomrel{\coloneqq} x^{n/2} 2^{-1} (g_{0b_{\log(d) - 2} \dots b_{0}}-g_{1b_{\log(d)-2} \dots b_{0}})\cdot {\alpha_{0b_{\log(d)-2} \dots b_{0}}}^{-1}
       \end{aligned}$
    }
    \Return{${\normalfont\textsc{IFFT}}(\{g_{b_{\log(d)-2} \dots b_{0}}\},{\alpha_{0}}^{2}, \dots, {\alpha_{d/2-1}}^{2})$}
    \end{algorithm}

  This alternative interpretation has then same properties and can also be used to design the same efficient multiplication algorithms. We however believe the previous presentation was more insightful.


\section{Results and Discussion}

The main result can be summarized in the following way, in order to design an efficient multiplication algorithm in $\Z_{m}[x]/\left<x^{n}-a\right>$ via an FFT the necessary and sufficient condition is to have a set of $n$ different $n$\=/th roots of $a$ (\Cref{cond:evalAtRoots}, so that multiplication is compatible with congruence classes) with invertible differences (\Cref{cond:invertibleDifferences}, so that the inverse transform is defined) such that its recursive squares are equal two by two (\Cref{cond:twofold}, so that the computation can be efficiently done recursively).

This characterization is similar but not equivalent to the usual characterization with roots of unity (\Cref{cond:alphaOmega}), which is sufficient but not necessary. We have proven however that, as \Cref{thm:relationConditionsAlphaOmega} states, restricting to sets satisfying \Cref{cond:alphaOmega} does not decrease the applicability of these efficient multiplication algorithms as these $(\alpha,\omega)$\=/sets exist if and only if the necessary and sufficient sets satisfying \Cref{cond:evalAtRoots,cond:invertibleDifferences,cond:twofold} exist.

As intermediate result we have also proven that these properties are indeed independent in the general case, and we do believe that this analysis might help to clarify whether some considerations and conditions usually stated in the folklore are fundamental considerations about the algebraic structure or just conventions for a particular instantiation on a particular setting (as it is the case with roots of unity).

This framework is also a general introduction to FFT\=/multiplication from a rigorous mathematical point of view while still keeping it readable for an audience not familiarized with more advanced algebraic considerations.

For example our analysis directly generalizes, as we have seen in~\Cref{sec:generalizations}, to a ring where $x^{n}-a$ does not fully split and $\Z_{m}$ is not a field, while alternative interpretations are much more delicate to work with.

\noindent {\bf Acknowledgments.} This work is partially supported by the European Union PROMETHEUS project (Horizon 2020 Research and Innovation Program, grant 780701) and the Spanish Ministry of Economy and Competitiveness, through Project MTM2016-77213-R.


\end{document}